\theoremstyle{definition}
\newtheorem{theorem}{Theorem}
\newtheorem{corollary}[theorem]{Corollary}
\newtheorem{proposition}[theorem]{Proposition}
\newtheorem{lemma}[theorem]{Lemma}
\newtheorem{definition}[theorem]{Definition}
\newtheorem{example}[theorem]{Example}
\newtheorem{notation}[theorem]{Notation}
\newtheorem{remark}[theorem]{Remark}
\newtheorem{remarks}[theorem]{Remarks}
\newtheorem{algo}[theorem]{Algorithm}
\newcommand{\numberset}{\mathbb}
\newcommand{\F}{\numberset{F}}
\newcommand{\mS}{\mathcal{S}}
\newcommand{\mC}{\mathcal{C}}
\newcommand{\mP}{\mathcal{P}}
\newcommand{\mG}{\mathcal{G}}
\newcommand{\mF}{\mathcal{F}}
\DeclareMathOperator{\spn}{span}
\newcommand{\mM}{\mathcal{M}}
\begin{document}

\title{\textbf{Equidistant subspace codes}}

\author{Elisa Gorla and Alberto Ravagnani\thanks{E-mails:
\texttt{alberto.ravagnani@unine.ch}, \texttt{elisa.gorla@unine.ch}. The authors
were partially supported by the Swiss National Science Foundation through grant no. 200021\_150207 and by the ESF COST Action IC1104.}}
  
 \affil{Institut de Math\'{e}matiques \\ Universit\'{e} de
Neuch\^{a}tel \\ Emile-Argand 11, CH-2000 Neuch\^{a}tel, Switzerland}

\date{}

\makeatletter
\newcommand{\subjclass}[2][2010]{%
  \let\@oldtitle\@title%
  \gdef\@title{\@oldtitle\footnotetext{#1 \emph{Mathematics subject classification:} #2}}%
}
\newcommand{\keywords}[1]{%
  \let\@@oldtitle\@title%
  \gdef\@title{\@@oldtitle\footnotetext{\emph{Keywords:} #1.}}%
}
\makeatother

\subjclass{11T71, 14G50, 94B60, 51E23, 15A21.}

\keywords{network coding, equidistant subspace codes, sunflowers, spreads and partial spreads.}

\maketitle

\begin{abstract}
In this paper we study equidistant subspace codes, i.e. 
subspace codes with the property that each two distinct codewords have the same
distance. We provide an almost complete classification of such codes under the
assumption that the cardinality of the ground field is large enough. More precisely, we prove
that for most values of the parameters, an equidistant code of maximum
cardinality is either a sunflower or the orthogonal of a sunflower. We
also study equidistant codes with extremal parameters, and establish
general properties of equidistant codes that are not sunflowers.
Finally, we propose a systematic construction
of equidistant codes based on our previous construction of partial spread codes,
and provide an efficient decoding algorithm.
\end{abstract}

\section*{Introduction} \label{intr} 

Network coding is a branch of information theory concerned with data transmission over
noisy and lossy networks. A network is modeled by a directed acyclic multigraph,
and information travels from one or multiple sources to multiple receivers through
intermediate nodes.
Network coding has several applications, e.g. peer-to-peer networking,
distributed storage and patches distribution. In \cite{origine} it was proved
that the information rate of a network communication may be improved
employing coding at the nodes of the network, instead of simply routing the
received inputs.
In \cite{origine2} it was shown
that maximal information rate can be achieved in the multicast situation by allowing the intermediate nodes to
perform linear combination of the inputs they receive, provided that the cardinality of the ground field
is sufficiently large. Random linear network coding was introduced in 
\cite{random}, and a mathematical approach
was proposed in \cite{KK1} and \cite{KK2}, together with the definition of
subspace code.

In this paper we study equidistant subspace codes, i.e., subspace codes
with the property that the intersection of any pair of codewords has the same dimension.
Equidistant subspace codes were shown to have relevant applications 
in distributed storage in~\cite{nat}.
In the same paper, Etzion and Raviv identify two trivial families of equidistant codes, namely
sunflowers and balls. A ball is a subspace code in the Grassmannian $\mG_q(k,n)$ of 
$k$-dimensional subspaces of $\F_q^n$ with the property that all the elements of the code are 
cointained in a fixed $(k+1)$-dimensional subspace of $\F_q^n$. They proceed then to study 
the question of when an equidistant code belongs to one of the two families. 
Starting from the observation that the orthogonal of a ball is a sunflower, in this paper we study 
the question of when an equidistant code is either a sunflower or the orthogonal of a sunflower. 
One of our main results is a classification of equidistant subspace codes over fields of 
large enough cardinality: We prove that, for most choices of the parameters, an
equidistant code of maximum cardinality is either a sunflower or 
the orthogonal of a sunflower. In addition, for most values of the parameters 
the two possibilities are mutually exclusive. 
We also study extremal equidistant codes, i.e. codes for which every two distinct 
codewords intersect in codimension one. 
We show that each such code is either a sunflower of the orthogonal 
of a sunflower, over fields of any size and for a code of any cardinality.
We also establish general properties of equidistant codes that are not sunflowers.
Finally, we give a systematic construction of asymptotically optimal
equidistant codes based on the construction of partial spread codes 
from~\cite{noi}. We then exploit the structure of our codes to design an
efficient decoding algorithm for them and for their orthogonals.

The paper is organized as follows: In Section \ref{prel} we recall some definitions and results on subspace codes, 
equidistant codes, sunflowers and partial spreads. In Section \ref{specparam} we study extremal equidistant codes, 
with the property that each two distinct elements intersect in codimension one.
In Section \ref{propr} we give a classification of equidistant codes for most values 
of $k,n$ and for $q\gg 0$. The classification is summarized in Theorem~\ref{class}.
In Section \ref{other} we study equidistant
codes that are not sunflowers. In Section~\ref{con} we give a systematic
construction for sunflower codes, and we argue that their cardinality is
asymptotically optimal.
In Section \ref{de} we show how to decode them efficiently and in Section
\ref{duals} we explicitely describe their orthogonal codes and show how to decode them.

\section{Preliminaries} \label{prel}

We briefly recall the main definitions and results on subspace codes, equidistant
codes, and partial spreads.

\begin{notation}
Throughout the paper $q$ denotes a fixed prime power, and $k,n$ two 
integers with $1 \le k < n$. We denote by $\mG_q(k,n)$ the set of
$k$-dimensional vector subspaces of $\F_q^n$.
\end{notation}

\begin{definition} \label{maindef}
 The \textbf{subspace distance} between subspaces $U,V \subseteq \F_q^n$
 is defined by $$d(U,V):=\dim(U)+\dim(V)-2\dim(U \cap V).$$ A \textbf{subspace code}
of constant dimension $k$
 is a subset $\mC \subseteq \mG_q(k,n)$ with
 $|\mC|\ge 2$. The \textbf{minimum distance} of $\mC$ is
 $d(\mC):=\min \{ d(U,V) : U,V \in \mC, U \neq V\}$.
 The code $\mC$ is \textbf{equidistant} if for all
 $U,V \in \mC$ with $U \neq V$ we have $d(U,V)=d(\mC)$. An equidistant code
$\mC \subseteq \mG_q(k,n)$ is \textbf{$c$-intersecting} if $d(\mC)=2(k-c)$.
\end{definition}

Notice that equidistant $c$-intersecting codes exist only for $n\geq 2k-c$, 
since codes contain at least two codewords. 

\begin{notation}
Given an integer $0 \le c \le k-1$, we denote by $e_q(k,n,c)$ the largest
cardinality of an equidistant $c$-intersecting subspace code $\mC
\subseteq \mG_q(k,n)$. 
\end{notation}

\begin{definition}
An equidistant $c$-intersecting code $\mC \subseteq \mG_q(k,n)$ is {\bf optimal} if $|\mC|=e_q(k,n,c)$.
A family of codes $\mC_q \subseteq \mG_q(k,n)$ is {\bf asymptotically optimal} if 
$\lim_{q\to\infty}|\mC_q|/e_q(k,n,c)=1$.
\end{definition}
%
%
%
Partial spreads are a first example of equidistant subspace codes.

\begin{definition}
 A \textbf{partial spread} in $\mG_q(k,n)$ is a subspace code $\mS \subseteq \mG_q(k,n)$
 with $d(\mS)=2k$.
\end{definition}
The maximum cardinality of a partial spread
$\mS \subseteq \mG_q(k,n)$ is $e_q(k,n,0)$ by definition. 
A systematic construction for partial spreads and an efficient decoding algorithm are 
given in \cite{noi}. The cardinality of the codes from~\cite{noi} meets the 
lower bound of the following well-known result (see e.g. \cite{Beu_1}). 
It follows that the codes are asymptotically optimal.

\begin{theorem}\label{bo}
 Let $r$ denote the remainder obtained dividing $n$ by $k$. We have 
 $$\frac{q^n-q^r}{q^k-1}-q^r+1 \le e_q(k,n,0) \le \frac{q^n-q^r}{q^k-1}.$$
\end{theorem}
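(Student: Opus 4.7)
The plan is to treat the two inequalities separately: a short double-counting argument for the upper bound, and an appeal to an explicit construction for the lower bound.

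For the upper bound, I would start from the defining property of a partial spread $\mS \subseteq \mG_q(k,n)$: since $d(\mS)=2k$, any two distinct $U,V \in \mS$ satisfy $\dim(U\cap V)=0$, so $U$ and $V$ share only the zero vector. Hence the nonzero vectors belonging to distinct codewords of $\mS$ are pairwise disjoint, and
$$|\mS|\,(q^k-1) \;\le\; q^n-1.$$
Writing $n=sk+r$ with $0\le r<k$, a direct manipulation gives
$$\frac{q^n-1}{q^k-1}\;=\;\frac{q^n-q^r}{q^k-1}+\frac{q^r-1}{q^k-1},$$
where the first summand on the right is an integer (the factor $q^r(q^{sk}-1)$ in the numerator is divisible by $q^k-1$) and the second lies in $[0,1)$ since $0\le r<k$. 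As $|\mS|$ is a non-negative integer, it cannot exceed the integer part of $(q^n-1)/(q^k-1)$, which is exactly $(q^n-q^r)/(q^k-1)$. This yields the upper bound.

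For the lower bound, I would exhibit a partial spread of the required size, citing the systematic construction of~\cite{noi} (which in turn refines Beutelspacher's classical construction in~\cite{Beu_1}). The idea of the construction is: when $r=0$, i.e.\ $k\mid n$, one obtains a genuine spread of cardinality $(q^n-1)/(q^k-1)$ by identifying $\F_q^n\cong\F_{q^k}^{n/k}$ and taking all one-dimensional $\F_{q^k}$-subspaces; this matches the lower bound since $q^r=1$. When $r>0$, one combines a spread living in an $(n-r)$-dimensional subspace of $\F_q^n$ with additional carefully chosen $k$-subspaces that meet the fixed $(n-r)$-dimensional subspace only in small pieces, ensuring pairwise trivial intersection. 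A direct enumeration of the codewords produced by this procedure yields exactly $(q^n-q^r)/(q^k-1)-q^r+1$, which is the desired lower bound.

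The main obstacle is the lower bound in the case $r>0$: verifying that the gluing step of the construction preserves the partial-spread property and counting the resulting codewords correctly. I would rely entirely on~\cite{noi} for this part, where the construction and its analysis are carried out in full detail; the rest of the proof is essentially formal, since the upper bound reduces to pigeonhole together with integrality.
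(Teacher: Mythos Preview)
The paper does not actually prove Theorem~\ref{bo}: it states the bounds as a well-known result, citing~\cite{Beu_1}, and remarks just before the statement that the construction of~\cite{noi} meets the lower bound. Your proposal is correct and follows exactly this path---the elementary counting/integrality argument you give for the upper bound is the standard one, and for the lower bound you invoke the same construction from~\cite{noi} that the paper points to---so there is nothing to add.
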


\begin{remark} \label{pss}
The lower and upper bound of Theorem \ref{bo} agree when $r=0$. 
In this case $k$ divides $n$ and the bound is always attained by codes that
are called \textbf{spreads} (see \cite{GR} and the references within). 
When $k$ does not divide $n$, deciding whether $e_q(k,n,0)$ may be equal to the upper bound for 
some values of $q,k,n$ is an open problem. 
For some special values of $q,k,n$ moreover, the lower bound of Theorem~\ref{bo} can be improved,
see e.g. \cite{Beu_2} and \cite{Drake_etc}.
\end{remark}

Sunflowers are a main source of examples of equidistant codes.

\begin{definition}
 A subspace code $\mF \subseteq \mG_q(k,n)$ is a \textbf{sunflower} if there exists a subspace $C \subset \F_q^n$
 such that for all $U,V \in \mF$ with $U \neq V$ we have $U\cap V=C$.
 The space $C$ is called the \textbf{center} of the sunflower $\mF$.
\end{definition}

 A sunflower $\mF \subseteq \mG_q(k,n)$ with center $C$ of dimension
$c$ is an equidistant $c$-intersecting
subspace code
 with minimum distance $2(k-c)$. 
The connection between partial spreads and sunflowers is described in the
following simple remark. The same
observation appears in \cite{etzion1}, Theorems 10 and 11.

\begin{remark} \label{legame}
 Let $\mF \subseteq \mG_q(k,n)$ be a sunflower with center $C$
of dimension $c$, and let $\varphi: \F_q^n/C \to \F_q^{n-c}$ be an
isomorphism.
 Then the subspace code $$\mS:=\{ \varphi(U/C) : U \in \mF\} \subseteq
\mG_q(k-c,n-c)$$
 is a partial spread with $|\mS|=|\mF|$. Conversely, given an
integer $0 \le c \le k-1$, a partial spread
 $\mS \subseteq \mG_q(k-c,n-c)$, and a subspace $C \subseteq \F_q^n$, the subspace code
$\mF:= \{ C \oplus U : U \in \mS \} \subseteq \mG_q(k,n)$ is a sunflower with
center $C$ and $|\mF|=|\mS|$.
\end{remark}

By Remark \ref{legame} one easily obtains the

\begin{corollary}\label{cor} For all $0 \le c \le k-1$ we have $e_q(k,n,c)
\ge e_q(k-c,n-c,0)$.
\end{corollary}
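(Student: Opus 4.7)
The proof is an immediate application of the second half of Remark~\ref{legame}, so my plan is essentially to unwind that construction and verify that it produces an equidistant $c$-intersecting code of the right cardinality. First I would fix any $c$-dimensional subspace $C\subseteq\F_q^n$, choose an $(n-c)$-dimensional complement $W$ so that $\F_q^n=C\oplus W$, and identify $W$ with $\F_q^{n-c}$ via any isomorphism. This identification lets me transport a partial spread $\mS\subseteq\mG_q(k-c,n-c)$ with $d(\mS)=2(k-c)$ into a collection of $(k-c)$-dimensional subspaces $\widetilde\mS$ of $W\subseteq\F_q^n$, with $|\widetilde\mS|=|\mS|$.

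Next, I would take $\mS$ to be an optimal partial spread, so $|\mS|=e_q(k-c,n-c,0)$, and form
\[
\mF:=\{\, C\oplus U : U\in\widetilde\mS\,\}\subseteq\mG_q(k,n).
\]
Since each $U\in\widetilde\mS$ lies in $W$ and $C\cap W=0$, every $C\oplus U$ is a genuine $k$-dimensional subspace of $\F_q^n$, and distinct $U,U'\in\widetilde\mS$ give distinct $C\oplus U$ and $C\oplus U'$, so $|\mF|=|\widetilde\mS|=|\mS|$.

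It remains to check that $\mF$ is equidistant $c$-intersecting. For distinct $U,U'\in\widetilde\mS$ one has $U\cap U'=0$ (because $d(\mS)=2(k-c)$ forces pairwise trivial intersection among $(k-c)$-dimensional subspaces), and both $U,U'\subseteq W$ which is a complement of $C$. A short dimension argument then yields $(C\oplus U)\cap (C\oplus U')=C$, so $\mF$ is a sunflower with center $C$ of dimension $c$, hence equidistant $c$-intersecting. Therefore
\[
e_q(k,n,c)\ge|\mF|=|\mS|=e_q(k-c,n-c,0),
\]
as claimed. There is no real obstacle here; the only point to be careful about is the embedding of $\mS$ inside a complement of $C$, which is what makes the direct sums $C\oplus U$ well-defined and of dimension $k$.
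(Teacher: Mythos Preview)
Your proof is correct and follows exactly the route the paper intends: the paper simply says the corollary is obtained from Remark~\ref{legame}, and you have carefully unwound the second half of that remark, making explicit the embedding of $\F_q^{n-c}$ as a complement of $C$ that the remark leaves implicit. There is nothing to add.
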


The following result shows that equidistant codes of large cardinality are sunflowers.
The proof is based on a result by Deza on classical codes (see \cite{dfO} and \cite{df}), applied in
the context of network coding by Etzion and Raviv.

\begin{theorem}[\cite{etzion1}, Theorem 1] \label{sun}
 Let $0 \le c \le k-1$ be an integer, and let $\mC \subseteq \mG_q(k,n)$ be a
 $c$-intersecting equidistant code. Assume that $$|\mC| \ge ((q^k-q^c)/(q-1))^2+
(q^k-q^c)/(q-1)+1.$$
 Then $\mC$ is a sunflower.
\end{theorem}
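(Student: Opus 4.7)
The plan is to reduce the statement to a classical theorem of Deza on equidistant set systems. To each codeword $U \in \mC$ I would associate the set $P(U) \subseteq \Pro(\F_q^n)$ of one-dimensional subspaces of $\F_q^n$ contained in $U$. Since $\dim U = k$, we have $|P(U)| = (q^k-1)/(q-1)$; and for distinct $U,V \in \mC$ the equality $P(U) \cap P(V) = P(U \cap V)$, together with $\dim(U \cap V) = c$, gives $|P(U) \cap P(V)| = (q^c-1)/(q-1)$.

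Hence $\{P(U) : U \in \mC\}$ is a family of subsets of $\Pro(\F_q^n)$, all of constant size $w := (q^k-1)/(q-1)$, with pairwise intersections of constant size $\lambda := (q^c-1)/(q-1)$. Setting $m := w-\lambda = (q^k-q^c)/(q-1)$, the numerical hypothesis becomes $|\mC| \ge m^2 + m + 1$. This is exactly the threshold in Deza's theorem on equidistant set systems, which then yields a subset $X \subseteq \Pro(\F_q^n)$ with $|X| = \lambda$ and $X \subseteq P(U)$ for every $U \in \mC$.

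It remains to recognize $X$ as the projective point set of a $c$-dimensional subspace. For any two distinct $U, V \in \mC$, the inclusion $X \subseteq P(U) \cap P(V) = P(U \cap V)$ combined with $|X| = \lambda = |P(U \cap V)|$ forces $X = P(U \cap V)$. In particular, $X$ is the point set of the $c$-dimensional subspace $C := U \cap V$, and this subspace does not depend on the chosen pair. It follows that $P(C) \subseteq P(W)$, hence $C \subseteq W$, for every $W \in \mC$; and since $\dim(W \cap W') = c = \dim C$ for distinct $W, W' \in \mC$, we conclude $W \cap W' = C$, so $\mC$ is a sunflower with center $C$.

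The combinatorial heart of the argument is Deza's theorem, which we simply cite; once it is invoked, the rest is a routine translation between $k$-dimensional subspaces and their sets of projective points. The only delicate point I anticipate is ensuring that the common ``core'' $X$ produced by Deza's theorem is the set of points of an actual subspace rather than an arbitrary set of $\lambda$ projective points, but this is forced by the equality $|X| = |P(U \cap V)|$ and does not require additional work.
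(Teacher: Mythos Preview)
The paper does not give its own proof of this theorem: it is cited from Etzion--Raviv, with the remark that ``the proof is based on a result by Deza on classical codes \dots\ applied in the context of network coding.'' Your proposal reconstructs precisely this argument --- pass to projective point sets, apply Deza's theorem on equidistant set systems at the threshold $m^2+m+1$ with $m=(q^k-q^c)/(q-1)$, and translate the resulting set-theoretic sunflower back to a subspace sunflower --- and it is correct, matching the approach the paper attributes to the cited source.
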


\begin{remark}
 Deza conjectured that any $c$-intersecting equidistant code $\mC \subseteq
\mG_q(k,n)$ with $|\mC|>(q^{k+1}-1)/(q-1)$
is a sunflower (see \cite{etzion1}, Conjecture 1). The conjecture was
disproved in \cite{etzion1}, Section 3.2, where the authors give an example of an
equidistant code $\mC \subseteq \mG_2(3,6)$ of minimum distance $4$ and 
cardinality $16$, which is not a sunflower. The example was found by computer search.
\end{remark}

We close this section with the definition of orthogonal and span of a code.

\begin{definition}
 The \textbf{orthogonal} of a code $\mC \subseteq \mG_q(k,n)$ is 
 $\mC^\perp:= \{ U^\perp : U \in \mC\} \subseteq \mG_q(n-k,n)$, where $U^\perp$
is the orthogonal of $U$ with respect to the standard inner product of $\F_q^n$.
\end{definition}

\begin{remark} \label{dua}
For any $U,V \in \mG_q(k,n)$ we have 
$\dim(U^\perp \cap V^\perp)=n-2k+\dim(U\cap V)$. Hence $d(\mC)=d(\mC^\perp)$ 
for any subspace code $\mC \subseteq \mG_q(k,n)$.
In particular, the orthogonal of a $c$-intersecting equidistant code $\mC
\subseteq
\mG_q(k,n)$ is a $(n-2k+c)$-intersecting equidistant code (see also Theorem 13 and Theorem 14 of \cite{etzion1}).
 Notice that $n-2k+c\geq 0$, 
since $\mC$ contains two distinct codewords.
This proves that $$e_q(k,n,c)=e_q(n-k,n,n-2k+c)$$ for all $0 \le c \le k-1$,
and the orthogonal of an optimal equidistant code is an optimal equidistant code.
\end{remark}

\begin{definition}
Let $\mC \subseteq \mG_q(k,n)$ be a subspace code. We define the \textbf{span}
of $\mC$ as 
$$\spn(\mC):= \sum_{U \in \mC} U \subseteq
\F_q^n.$$ 
\end{definition}

\section{Extremal equidistant codes} \label{specparam}

In this section we study $(k-1)$-intersecting codes in $\mG_q(k,n)$. 
We call such codes {\bf extremal}, since $k-1$ is the largest possible value of $c$, for 
given $k$ and $n$. 
Notice that these codes are equidistant with minimum distance $d=2$. 
In particular, the orthogonal of an extremal code is extremal.
Our main result shows that every extremal equidistant code is either a sunflower, 
or the orthogonal of a sunflower.
In Section \ref{propr} we establish a similar result for most choices of $(k,n,c)$ and for $q \gg 0$.

\begin{proposition}\label{ccequi}
 Let $\mC \subseteq \mG_q(k,n)$ be a $c$-intersecting equidistant code. The
following are equivalent:
\begin{enumerate}
 \item $\mC$ is a sunflower,
 \item $\dim \mbox{span}(\mC^{\perp})=n-c$,
 \item for all $A,B \in \mC$ with $A \neq B$ we have $\mbox{span}(\mC^\perp)=A^\perp+B^\perp$.
\end{enumerate}
\end{proposition}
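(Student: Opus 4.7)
The plan is to exploit duality systematically: a sunflower with center $C$ of dimension $c$ is characterized by having all pairwise intersections equal to $C$, and passing to orthogonals turns this into the condition that all pairwise sums of the $U^\perp$ equal the $(n-c)$-dimensional space $C^\perp$. Once one sees this, all three equivalences become essentially dimension bookkeeping anchored by Remark \ref{dua}, which gives $\dim(A^\perp \cap B^\perp) = n-2k+c$ for any two distinct codewords $A, B \in \mathcal{C}$.

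For $(1) \Rightarrow (3)$, I would assume $\mathcal{C}$ is a sunflower with center $C$ of dimension $c$. Since $C \subseteq U$ for every $U \in \mathcal{C}$, we have $U^\perp \subseteq C^\perp$, so $\mathrm{span}(\mathcal{C}^\perp) \subseteq C^\perp$. On the other hand, for distinct $A, B \in \mathcal{C}$, $A \cap B = C$, hence $A^\perp + B^\perp = (A \cap B)^\perp = C^\perp$. Combining, $\mathrm{span}(\mathcal{C}^\perp) = A^\perp + B^\perp = C^\perp$.

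For $(3) \Rightarrow (2)$, it suffices to compute
$$\dim(A^\perp + B^\perp) = 2(n-k) - \dim(A^\perp \cap B^\perp) = 2(n-k) - (n-2k+c) = n-c,$$
using Remark \ref{dua}.

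The only step requiring a small idea is $(2) \Rightarrow (1)$, but it too falls to duality. Let $W = \mathrm{span}(\mathcal{C}^\perp) = \sum_{U \in \mathcal{C}} U^\perp$. Then
$$W^\perp = \bigcap_{U \in \mathcal{C}} (U^\perp)^\perp = \bigcap_{U \in \mathcal{C}} U,$$
and the hypothesis $\dim W = n-c$ forces $\dim W^\perp = c$. Set $C := W^\perp$. For any two distinct $U, V \in \mathcal{C}$ we have $C \subseteq U \cap V$, and by hypothesis $\dim(U \cap V) = c = \dim C$, whence $C = U \cap V$. Thus $\mathcal{C}$ is a sunflower with center $C$. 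I do not expect any real obstacle; the only care needed is to observe that $\mathcal{C}$ is assumed $c$-intersecting, so the two dimensions match exactly and force equality.
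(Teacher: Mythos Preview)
Your proof is correct and follows essentially the same duality idea as the paper: the key observation is that $A\cap B=C$ is equivalent to $A^\perp+B^\perp=C^\perp$, and the rest is dimension bookkeeping using the $c$-intersecting hypothesis. The paper's own proof is simply a terser version of yours, asserting that $(2)\Leftrightarrow(3)$ is clear and that $(1)\Leftrightarrow(3)$ follows directly from the stated duality, whereas you spell out the cycle $(1)\Rightarrow(3)\Rightarrow(2)\Rightarrow(1)$ in full detail.
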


\begin{proof}
Properties $(2)$ and $(3)$ are clearly equivalent. The code $\mC$ is a
sunflower if and only if there exists $C \subseteq \F_q^n$ with $\dim(C)=c$
such that $A\cap B = C$ for all $A,B \in \mC$ with $A \neq B$. The condition
$A\cap B=C$ is equivalent to $A^\perp+B^\perp=C^\perp$. Hence $(1)$ and $(3)$
are equivalent.
\end{proof}

The following may be regarded as classification of $(k-1)$-intersecting codes in $\mG_q(k,n)$.

\begin{proposition} \label{c=k-1}
 Let $\mC \subseteq \mG_q(k,n)$ be a $(k-1)$-intersecting equidistant code.
 Then either $\mC$ is a sunflower, or $\mC^\perp$ is a sunflower. 
\end{proposition}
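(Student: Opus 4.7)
The plan is to use Proposition \ref{ccequi} to reformulate the conclusion in terms of the dimension of the span of $\mC$. Since $\mC^\perp$ is $(n-k-1)$-intersecting by Remark \ref{dua}, Proposition \ref{ccequi} applied to $\mC^\perp$ shows that $\mC^\perp$ is a sunflower if and only if $\dim\spn(\mC)=k+1$, i.e., all codewords lie in a common $(k+1)$-dimensional subspace. Hence it suffices to prove: if $\dim\spn(\mC)\ge k+2$, then $\mC$ is a sunflower.

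Fix distinct $A,B\in\mC$, set $L_0=A\cap B$ (dimension $k-1$) and $H=A+B$ (dimension $k+1$). The hypothesis $\dim\spn(\mC)\ge k+2$ gives some $C\in\mC$ with $C\not\subseteq H$. The first step is a dimension chase: $C\cap A\subseteq C\cap H\subseteq C$ forces $\dim(C\cap H)\in\{k-1,k\}$, and since $C\not\subseteq H$ we must have $\dim(C\cap H)=k-1$; hence $C\cap A=C\cap H=C\cap B$, and this common subspace, being contained in $A\cap B=L_0$ and of the same dimension, equals $L_0$. Thus $L_0\subseteq C$ and $A\cap C=B\cap C=L_0$.

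Next I would show that $L_0\subseteq D$ for every $D\in\mC$, splitting into two cases. If $D\not\subseteq H$, the same argument with $D$ replacing $C$ yields $A\cap D=L_0$, so $L_0\subseteq D$. If instead $D\subseteq H$, I use the auxiliary codeword $C$: since $C\cap D\subseteq C\cap H=L_0$ and $\dim(C\cap D)=k-1=\dim L_0$, we obtain $C\cap D=L_0$, and in particular $L_0\subseteq D$. With $L_0$ contained in every codeword, for any two distinct $D,E\in\mC$ we have $L_0\subseteq D\cap E$ and $\dim L_0=k-1=\dim(D\cap E)$, so $D\cap E=L_0$, proving $\mC$ is a sunflower with center $L_0$.

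The argument is essentially a dimension bookkeeping, with no serious obstacle; the only mildly delicate point is handling codewords $D$ that happen to lie inside $H=A+B$, where the direct comparison of $D\cap A$ with $L_0$ fails and one must route the argument through a third codeword $C$ chosen outside $H$. This is exactly where the assumption $\dim\spn(\mC)\ge k+2$ is used, and it is the whole content of the proof.
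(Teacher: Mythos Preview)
Your proof is correct. Both your argument and the paper's reduce the statement to Proposition~\ref{ccequi} and then do a short dimension count with three codewords; the difference is purely in which direction of the contrapositive is argued. The paper assumes $\mC$ is not a sunflower, picks $D$ with $D\cap A\neq D\cap B$, and shows every codeword lies in $A+B$ (so $\dim\spn(\mC)=k+1$ and $\mC^\perp$ is a sunflower). You instead assume $\dim\spn(\mC)\ge k+2$, pick $C\not\subseteq A+B$, deduce $C\cap A=C\cap B=A\cap B=:L_0$, and show $L_0$ lies in every codeword. Your route has the small advantage of exhibiting the center explicitly; the paper's route avoids the case split on whether $D\subseteq H$. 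Either way the content is the same dimension bookkeeping.
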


\begin{proof}
 If $|\mC|=2$ the result is trivial. Assume $|\mC| \ge 3$ and that $\mC$ is not
a sunflower. Let $A,B \in \mC$ with $A \neq B$. By
Proposition \ref{ccequi} it suffices to show that $\mbox{span}(\mC)=A+B$. Since
$\mC$ is not a sunflower, there exists $D \in \mC \setminus \{ A,B\}$
such that $D\cap A\neq D\cap B$. Since
$D\supseteq D \cap A + D\cap B$ and $\dim(D \cap A + D\cap B) \ge \dim(D \cap A)+1=k$, 
then $D=D \cap A + D\cap B \subseteq A+B$. For any $E\in \mC \setminus \{
A,B,D\}$ we have
$E \supseteq E\cap A + E\cap B + E\cap D$. Since $\dim (A\cap B\cap D) <k-1$, then $E \cap A$, $E\cap B$, and $E
\cap D$ are not all equal. Hence $\dim(E\cap A + E \cap B + E \cap D)\geq \dim(E\cap A)+1=k$ 
and $E=E\cap A + E \cap B + E \cap D\subseteq A+B$. Therefore $\mbox{span}(\mC)=A+B$. 
\end{proof}

As a corollary, we obtain an improvement of Theorem 12 and Corollary 1 of \cite{etzion1}.

\begin{corollary}\label{extr_bound}
Let $\mC \subseteq \mG_q(k,n)$ be a $(k-1)$-intersecting equidistant code. 
If $\mC$ is not a sunflower, then it is a subset of the set of $k$-dimensional subspaces
of a given $(k+1)$-dimensional space.
In particular, if $|\mC| > \begin{bmatrix} k+1
\\ k \end{bmatrix}$, then $\mC$ is a sunflower. 
\end{corollary}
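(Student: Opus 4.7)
The plan is to derive this corollary as an almost immediate consequence of Proposition \ref{c=k-1} by dualizing. Assume $\mC$ is not a sunflower. Then by Proposition \ref{c=k-1}, the orthogonal code $\mC^\perp \subseteq \mG_q(n-k,n)$ is a sunflower. I would next pin down the dimension of its center: by Remark \ref{dua}, $\mC^\perp$ is $(n-2k+(k-1))$-intersecting, that is, $(n-k-1)$-intersecting. Since the codewords of $\mC^\perp$ have dimension $n-k$ and pairwise intersect in dimension $n-k-1$, the sunflower center $D$ of $\mC^\perp$ must have dimension $n-k-1$.

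Now I would translate this information back to $\mC$ via the inner product pairing. The condition $D \subseteq U^\perp$ for every $U \in \mC$ is equivalent to $U \subseteq D^\perp$ for every $U \in \mC$. Since $\dim D^\perp = n - (n-k-1) = k+1$, each codeword of $\mC$ is a $k$-dimensional subspace of the fixed $(k+1)$-dimensional space $D^\perp$, which is exactly the first assertion.

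For the "in particular" part, the total number of $k$-dimensional subspaces of a $(k+1)$-dimensional vector space over $\F_q$ equals $\begin{bmatrix} k+1 \\ k \end{bmatrix}$ by the very definition of the Gaussian binomial coefficient. Hence if $|\mC| > \begin{bmatrix} k+1 \\ k \end{bmatrix}$, the code $\mC$ cannot be contained in any $(k+1)$-dimensional space, and the contrapositive of the first assertion forces $\mC$ to be a sunflower.

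There is essentially no real obstacle here: the corollary is a direct consequence of Proposition \ref{c=k-1} combined with orthogonal duality, and the only point needing care is the bookkeeping of dimensions when passing from $\mC$ to $\mC^\perp$ and back.
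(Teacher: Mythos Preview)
Your proof is correct and follows essentially the same approach as the paper: both invoke Proposition~\ref{c=k-1} to conclude that $\mC^\perp$ is a sunflower, and then dualize to see that all codewords of $\mC$ lie in a fixed $(k+1)$-dimensional space. The only cosmetic difference is that the paper packages the duality step by citing Proposition~\ref{ccequi} (which directly gives $\dim\spn(\mC)=k+1$), whereas you carry out the same computation by hand via Remark~\ref{dua} and the center $D$ of $\mC^\perp$.
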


\begin{proof}
 If $\mC$ is not a sunflower, then by Proposition \ref{c=k-1} the orthogonal code
$\mC^\perp$ is a sunflower. By Proposition \ref{ccequi} this is equivalent to
$\dim \mbox{span}(\mC)=k+1$. Then all the codewords of
$\mC$ are contained in a fixed $(k+1)$-dimensional space of $\F_q^n$. In particular,
their number cannot exceed the number of $k$-dimensional subspaces of a
$(k+1)$-dimensional space, a contradiction. 
\end{proof}

\begin{remarks}
\begin{enumerate}
\item Combining Theorem \ref{bo}, Corollary \ref{cor}, and Corollary~\ref{extr_bound}, 
we have that if $\mC$ has maximum cardinality $e_q(k,n,k-1)$ and $n \gg 0$, then $\mC$ is a sunflower.
\item As observed in \cite{etzion1}, the bound of Corollary~\ref{extr_bound} is sharp for any $k,n$. 
In fact, let $\mC$ be the set of 
$k$-dimensional subspaces of a fixed $(k+1)$-dimensional space of $\mathbb{F}_q^n$.
$\mC$ is an equidimensional $(k-1)$-intersecting code of cardinality $\begin{bmatrix} k+1
\\ k \end{bmatrix}$ which is not a sunflower. 
\end{enumerate}
\end{remarks}

\section{A classification of equidistant codes} \label{propr}

In this section we provide a classification of optimal equidistant codes for most 
values of the parameters. 
More precisely we prove that, for $q\gg0$ and for most values of $k$ and $n$,  
every optimal equidistant code is either a sunflower or the orthogonal of a sunflower.
We start by studying the case when $k$ is small with respect to $n$.

\begin{proposition}\label{noi}
Let $q\gg 0$ and $n\geq 3k-1$. Then $$e_q(k,n,c)=e_q(k-c,n-c,0).$$
Moreover, any $c$-intersecting equidistant code $\mC\subseteq \mG_q(k,n)$ of
cardinality $e_q(k,n,c)$ is a sunflower.
\end{proposition}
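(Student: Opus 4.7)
The plan is to chain together the three main ingredients already assembled in Section~\ref{prel}: the estimate on partial spreads (Theorem~\ref{bo}), the lower bound $e_q(k,n,c) \ge e_q(k-c,n-c,0)$ coming from sunflowers (Corollary~\ref{cor}), and Deza's sunflower threshold (Theorem~\ref{sun}). The strategy is to show that the lower bound on $e_q(k,n,c)$ eventually overtakes the Deza threshold, forcing every optimal code to be a sunflower, and then to use Remark~\ref{legame} to close the chain into an equality.

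First I would combine Corollary~\ref{cor} and Theorem~\ref{bo} to obtain
$$e_q(k,n,c)\;\ge\;e_q(k-c,n-c,0)\;\ge\;\frac{q^{n-c}-q^r}{q^{k-c}-1}-q^r+1,$$
where $r$ is the remainder of $n-c$ divided by $k-c$. The right-hand side is a polynomial in $q$ whose leading term has degree $n-k$. On the other hand, the threshold appearing in Theorem~\ref{sun}, namely $((q^k-q^c)/(q-1))^2+(q^k-q^c)/(q-1)+1$, is a polynomial in $q$ of degree $2(k-1)$. The hypothesis $n\ge 3k-1$ gives $n-k\ge 2k-1>2(k-1)$, so the former polynomial strictly dominates the latter for $q$ sufficiently large.

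Consequently, for $q\gg 0$, any optimal $c$-intersecting equidistant code $\mC\subseteq\mG_q(k,n)$ has cardinality $|\mC|=e_q(k,n,c)$ exceeding Deza's threshold, and Theorem~\ref{sun} applies to conclude that $\mC$ is a sunflower. Being a sunflower with center of dimension $c$, Remark~\ref{legame} associates to $\mC$ a partial spread in $\mG_q(k-c,n-c)$ of the same cardinality, whence $|\mC|\le e_q(k-c,n-c,0)$. Combined with Corollary~\ref{cor}, this yields the equality $e_q(k,n,c)=e_q(k-c,n-c,0)$, and simultaneously proves that every optimal code is a sunflower.

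The only real computational point is the degree comparison $n-k$ versus $2(k-1)$, which pinpoints precisely why the bound $n\ge 3k-1$ is the natural threshold; there is no substantial obstacle, since Theorem~\ref{sun} and Remark~\ref{legame} do all the structural work.
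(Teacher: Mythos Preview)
Your proposal is correct and follows essentially the same argument as the paper: both combine Theorem~\ref{bo} and Corollary~\ref{cor} to get a lower bound on $e_q(k,n,c)$ of order $q^{n-k}$, compare it with the Deza threshold of order $q^{2k-2}$, use $n\ge 3k-1$ to conclude the former dominates for $q\gg 0$, then apply Theorem~\ref{sun} and Remark~\ref{legame}. The only cosmetic difference is that the paper phrases the growth comparison via limits while you phrase it via polynomial degrees.
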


\begin{proof}
 Let $0 \le r \le k-c-1$ denote the remainder obtained dividing $n-c$ by $k-c$. 
Since $n>3k-2 \ge 2k-1$, we have $r \le k-1<n-k$. Therefore
$$\lim_{q\rightarrow\infty}\frac{\frac{q^{n-c}-q^r}{q^{k-c}-1}-q^r+1}{q^{n-k}}=1.$$
On the other hand $$\lim_{q \to \infty} \frac{\left(\frac{q^k-q^c}{q-1}\right)^2+ \frac{q^k-q^c}{q-1}+1}{q^{2k-2}}=1.$$
Since $k<(n+2)/3$ we have $n-k>2k-2$. Hence
$$\lim_{q \to \infty}\frac{\frac{q^{n-c}-q^r}{q^{k-c}-1}-q^r+1 - 
\left[\left(\frac{q^k-q^c}{q-1}\right)^2+ \frac{q^k-q^c}{q-1}+1\right]}{q^{n-k}}=1.$$
In particular, for $q\gg 0$ we have
$$\frac{q^{n-c}-q^r}{q^{k-c}-1}-q^r+1 \ge \left(\frac{q^k-q^c}{q-1}\right)^2+ 
\frac{q^k-q^c}{q-1}+1.$$ 
By Theorem \ref{bo} and Corollary \ref{cor} we have
$$|\mC| \ge \frac{q^{n-c}-q^r}{q^{k-c}-1}-q^r+1 \ge \left(\frac{q^k-q^c}{q-1}\right)^2+ 
\frac{q^k-q^c}{q-1}+1.$$ Theorem \ref{sun} implies that $\mC$ is a sunflower. 
Hence by  Remark \ref{legame} we have $e_q(k,n,c) = e_q(k-c,n-c,0)$.
\end{proof}

For completeness we also examine the case when $n$ is small with respect to $k$.

\begin{proposition} \label{primo}
Let $q \gg 0$ and $n\leq (3k+1)/2$. Then $$e_q(k,n,c) = e_q(k-c,2k-c,0).$$
Moreover, every $c$-intersecting equidistant code $\mC \subseteq \mG_q(k,n)$ of
cardinality $e_q(k,n,c)$ is of the form 
$\mS^\perp$, where $\mS$ is a sunflower.
\end{proposition}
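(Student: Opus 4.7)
The plan is to deduce this statement from Proposition~\ref{noi} by passing to the orthogonal code. The symmetry provided by Remark~\ref{dua} translates $c$-intersecting equidistant codes in $\mG_q(k,n)$ into $c'$-intersecting equidistant codes in $\mG_q(k',n)$, where $k'=n-k$ and $c'=n-2k+c$, and preserves both cardinalities and optimality.

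First I would compute the relevant parameter substitutions. Setting $k'=n-k$ and $c'=n-2k+c$, one gets $k'-c' = k-c$ and $n-c' = 2k-c$. The hypothesis $n \leq (3k+1)/2$ is exactly equivalent to $n \geq 3k'-1$, and the condition $0 \leq c' \leq k'-1$ follows from $n \geq 2k-c$ (which holds since $\mC$ has at least two codewords) together with $c \leq k-1$. Thus the parameters $(k',n,c')$ satisfy the hypotheses of Proposition~\ref{noi}.

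Next I would apply Proposition~\ref{noi} to $\mC^\perp$. By Remark~\ref{dua}, $\mC^\perp \subseteq \mG_q(k',n)$ is a $c'$-intersecting equidistant code with $|\mC^\perp|=|\mC|$, and $e_q(k,n,c)=e_q(k',n,c')$. Proposition~\ref{noi} then gives, for $q \gg 0$,
$$e_q(k,n,c) = e_q(k',n,c') = e_q(k'-c',n-c',0) = e_q(k-c,2k-c,0),$$
which is the cardinality statement. Moreover, if $\mC$ is optimal, then $\mC^\perp$ is optimal by Remark~\ref{dua}, so Proposition~\ref{noi} yields that $\mC^\perp$ is a sunflower $\mS$. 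Taking orthogonals once more, $\mC = \mS^\perp$, completing the proof.

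There is no real obstacle here: the argument is essentially a dictionary between a code and its dual. The only thing that requires a small check is the arithmetic verifying that the inequality $n \leq (3k+1)/2$ is precisely the dual counterpart of $n \geq 3k-1$; once this is observed, the classification in the "$k$ large compared to $n$" regime is a formal consequence of the classification in the "$k$ small compared to $n$" regime.
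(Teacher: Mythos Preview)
Your proposal is correct and follows essentially the same approach as the paper: apply duality (Remark~\ref{dua}) to reduce to the situation of Proposition~\ref{noi}, after checking that $n\le(3k+1)/2$ is equivalent to $n\ge 3(n-k)-1$. The paper's proof is a terse two-line version of what you wrote; your additional verifications (that $0\le c'\le k'-1$ and that $k'-c'=k-c$, $n-c'=2k-c$) simply make explicit what the paper leaves to the reader.
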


\begin{proof}
By Remark \ref{dua} we have $e_q(k,n,c)=e_q(n-k,n,n-2k+c)$. Since $n \ge 3(n-k)-1$, the thesis follows from
Proposition \ref{noi}.
\end{proof}

Proposition \ref{noi} and Proposition \ref{primo} 
imply that for $n \le (3k+1)/2$ or for $n \ge 3k-1$ and $q\gg 0$, 
every equidistant code of maximum cardinality $e_q(k,n,c)$ is either a
sunflower, or the orthogonal of a sunflower. 
We now show that these families are almost always disjoint.

\begin{lemma} \label{not}
Let $\mS \subseteq \mG_q(k,n)$ be a sunflower with center $C$ of dimension $0
\le c \le k-1$ and $\mbox{span}(\mS)=\F_q^n$. Assume that $n>2k-c$. 
Then $\mS^\perp$ is not a sunflower.
\end{lemma}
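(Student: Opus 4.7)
The plan is to apply Proposition~\ref{ccequi} to the dual code $\mS^\perp$ and derive a contradiction from the dimension hypotheses.

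First I would note that by Remark~\ref{dua}, since $\mS$ is $c$-intersecting, the orthogonal $\mS^\perp \subseteq \mG_q(n-k,n)$ is $(n-2k+c)$-intersecting, and $n-2k+c \ge 0$ because $|\mS|\ge 2$. Then I would apply Proposition~\ref{ccequi} (condition $(1)\Leftrightarrow(2)$) to $\mS^\perp$: this says $\mS^\perp$ is a sunflower if and only if
\[
\dim \mbox{span}\bigl((\mS^\perp)^\perp\bigr) = n - (n-2k+c) = 2k-c.
\]
Since $(\mS^\perp)^\perp = \mS$, this reduces the problem to computing $\dim\mbox{span}(\mS)$.

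The hypothesis $\mbox{span}(\mS)=\F_q^n$ gives $\dim\mbox{span}(\mS)=n$, and by assumption $n>2k-c$. Hence the equality $\dim\mbox{span}(\mS)=2k-c$ fails, and so $\mS^\perp$ cannot be a sunflower.

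There is really no main obstacle here: the work has already been done in Proposition~\ref{ccequi}, and the only thing to check is that the two dimensions $n$ and $2k-c$ are forced to be distinct, which is exactly the hypothesis $n>2k-c$. It is perhaps worth pointing out in passing why the situation is sensible: if $\mS^\perp$ were a sunflower, then all its pairwise sums $U_i^\perp+U_j^\perp$ would coincide, equivalently all intersections $U_i\cap U_j=C$ of $\mS$ would force the sums $U_i+U_j$ to be a single $(2k-c)$-dimensional space containing every codeword of $\mS$, which would cap $\dim\mbox{span}(\mS)$ at $2k-c$ and contradict spanning $\F_q^n$. This direct geometric argument mirrors the computation above and could be used as an alternative, but invoking Proposition~\ref{ccequi} makes the proof shortest.
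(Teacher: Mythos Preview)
Your proof is correct and is essentially the same argument as the paper's, just packaged differently. The paper argues directly: assuming $\mS^\perp$ has center $D$ of dimension $n-2k+c>0$, one gets $D\subseteq U^\perp$ for every $U\in\mS$, hence $U\subseteq D^\perp$, so $\spn(\mS)\subseteq D^\perp\subsetneq\F_q^n$, a contradiction. You instead invoke Proposition~\ref{ccequi} as a black box to reach the same contradiction via $\dim\spn(\mS)=n\neq 2k-c$; this is precisely the content of that proposition specialized to $\mS^\perp$, and your closing ``geometric'' remark is in fact the paper's direct proof.
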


\begin{proof}
By contradiction, assume that $\mS^\perp \subseteq\mG_q(n-k,n)$ is a sunflower
with center $D$. By
Remark \ref{dua} $\dim(D)=n-2k+c>0$. Moreover, $D \subseteq
U^\perp$ for all 
$U \in \mS$, i.e., 
$U \subseteq D^\perp$ for all $U \in \mS$. Then $\F_q^n=\spn(\mS)
\subseteq D^\perp$, 
which contradicts 
the assumption that $D\neq 0$. 
\end{proof}

Remark~\ref{dua} and Lemma~\ref{not}
allow us to construct a family of equidistant 
codes which are not sunflowers and have maximum cardinality for their
parameters.

\begin{example}\label{spreaddual}
Let $n=\ell k$, $\ell>2$. Let $\mS\subseteq\mG_q(k,\ell k)$ be a spread. 
Then $\mS^\perp$ is an optimal equidistant code which is not a sunflower by 
Lemma~\ref{not}. We have $$|\mS^\perp|=|\mS|=e_q(k,\ell k,0)=e_q((\ell-1)k,\ell k,(\ell-2)k),$$ 
where the last equality follows from Remark~\ref{dua}. 

Setting $k=1$ we recover two well-known examples of equidistant codes: 
$\mS$ is the set of lines in $\F_q^\ell$ and $\mS^\perp$ is the set of $(\ell-1)$-dimensional 
subspaces of $\F_q^\ell$. 
\end{example}

Now we prove that a $c$-intersecting sunflower $\mS \subseteq \mG_q(k,n)$ with maximum cardinality $e_q(k,n,c)$ is never contained
in a proper subspace of $\F_q^n$.

\begin{proposition}\label{full}
 Let $\mS \subseteq \mG_q(k,n)$ be a sunflower with center of dimension $0 \le
c \le k-1$. Let $r$ denote the remainder obtained dividing $n-c$ by $k-c$. If
$$|\mS| \ge \frac{q^{n-c}-q^r}{q^{k-c}-1}-q^r+1,$$ then $\mbox{span}(\mS)=\F_q^n$.
In particular, if $|\mS| = e_q(k,n,c)$ then $\mbox{span}(\mS)=\F_q^n$.
\end{proposition}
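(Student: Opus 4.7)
The plan is to translate the problem about the sunflower $\mS$ into a statement about a partial spread via Remark~\ref{legame}, bound the cardinality of that partial spread using the upper bound of Theorem~\ref{bo}, and compare with the hypothesized lower bound on $|\mS|$ to force the span to be all of $\F_q^n$.

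Concretely, I would let $W=\mbox{span}(\mS)$ and $w=\dim W$. The center $C$ of $\mS$ is contained in every element of $\mS$, so $C\subseteq W$, and Remark~\ref{legame} applied inside $W$ identifies $\mS$ (viewed as a subspace code of $\mG_q(k,w)$) with a partial spread $\mS'\subseteq\mG_q(k-c,w-c)$ of the same cardinality as $\mS$. The upper bound of Theorem~\ref{bo} then yields
$$|\mS|=|\mS'|\le \frac{q^{w-c}-q^{r'}}{q^{k-c}-1},$$
where $r'$ is the remainder of $w-c$ divided by $k-c$.

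Next I would assume towards a contradiction that $w<n$ and set $k'=k-c$, $n'=n-c$, $w'=w-c$, so that $w'\le n'-1$. Since $|\mS|\ge 2$, any two distinct codewords of $\mS$ span a $(2k-c)$-dimensional subspace of $\F_q^n$, whence $n'\ge 2k'$; together with $r<k'$ this forces $r+k'\le 2k'-1\le n'-1$. After clearing denominators, the inequality required for a contradiction reads
$$q^{w'}+q^{r+k'}<q^{n'}+q^{k'}+q^{r'}-1,$$
and since both $q^{w'}$ and $q^{r+k'}$ are bounded above by $q^{n'-1}$, it suffices to verify $2q^{n'-1}<q^{n'}+q^{k'}+q^{r'}-1$, i.e., $q^{n'-1}(q-2)+q^{k'}+q^{r'}-1>0$, which is immediate for every $q\ge 2$ and $k'\ge 1$. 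The upper bound on $|\mS|$ is therefore strictly smaller than the hypothesized lower bound, the desired contradiction, so $w=n$.

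The ``in particular'' assertion follows at once: by Corollary~\ref{cor} and the lower bound of Theorem~\ref{bo} one has $e_q(k,n,c)\ge e_q(k-c,n-c,0)\ge (q^{n-c}-q^r)/(q^{k-c}-1)-q^r+1$, so a code with $|\mS|=e_q(k,n,c)$ meets the hypothesis of the proposition. The only substantive step is the numerical inequality above; the main point to notice is that the condition $n'\ge 2k'$ (which is built into the existence of a sunflower with $\ge 2$ codewords) simultaneously controls both exponents $w'$ and $r+k'$ by $n'-1$, after which the calculation collapses to one line.
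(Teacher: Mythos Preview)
Your proof is correct, but it takes a somewhat different route from the paper's. The paper argues by a direct counting of vectors: since $\mS$ is a sunflower with $c$-dimensional center, one has
\[
\Bigl|\bigcup_{V\in\mS}V\Bigr|=q^c+|\mS|\,(q^k-q^c),
\]
and substituting the hypothesized lower bound on $|\mS|$ together with $r\le k-c-1$ and $n\ge 2k-c$ gives $|\bigcup_{V\in\mS}V|>q^{n-1}$, so the codewords cannot all sit inside a hyperplane. Your argument instead passes through Remark~\ref{legame} to view $\mS$ as a partial spread inside $W=\spn(\mS)$ and then invokes the upper bound of Theorem~\ref{bo} in $\mG_q(k-c,w-c)$; the resulting numerical inequality is handled cleanly using $w'\le n'-1$ and $r+k'\le 2k'-1\le n'-1$. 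The paper's approach is entirely self-contained and slightly shorter (its vector count is in effect a proof of the trivial partial-spread bound $(q^{w-c}-1)/(q^{k-c}-1)$), whereas your approach is more modular, reusing the machinery already set up in Remark~\ref{legame} and Theorem~\ref{bo}. Both reach the same contradiction, and your handling of the ``in particular'' clause via Corollary~\ref{cor} and Theorem~\ref{bo} matches the paper's exactly.
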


\begin{proof}
Since $\mS$ is a sunflower with center of dimension $c$, we have
\begin{eqnarray*}
 |\bigcup_{V \in \mS} V|&=& q^c+ |\mS| (q^k-q^c) \\
 &\ge& q^c+q^c(q^{k-c}-1) \left( \frac{q^{n-c}-q^r}{q^{k-c}-1}-q^r+1  \right)
\\ &=& q^n+q^k-q^{k+r} \\
&\ge& q^n+q^k-q^{2k-c-1}.
\end{eqnarray*}
Since $|\mS| \ge 2$, then $n \ge 2k-c$, hence
$q^n+q^k-q^{2k-c-1} \ge q^n+q^k-q^{n-1} > q^{n-1}$. Therefore $\mS$ cannot be contained in a proper subspace of 
$\F_q^n$. The second part of the statement follows from Corollary \ref{cor} and Theorem \ref{bo}.
\end{proof}

\begin{corollary} \label{cc}
 Let $\mC \subseteq \mG_q(k,n)$ be a $c$-intersecting equidistant code with
$|\mC|=e_q(k,n,c)$. Then $\mC$ and
$\mC^\perp$
are both sunflowers if and only if $n=2k$ and both $\mC$ and $\mC^\perp$ are
spreads.
\end{corollary}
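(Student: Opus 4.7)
The plan is to treat the two implications separately. The backward direction is immediate: a spread in $\mG_q(k,2k)$ is a partial spread, hence a sunflower with center $\{0\}$; so if $n=2k$ and both $\mC$ and $\mC^\perp$ are spreads, both are sunflowers, and the maximality assumption holds since a spread attains $e_q(k,2k,0)$ by Theorem~\ref{bo}.

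For the forward direction, assume $\mC$ and $\mC^\perp$ are both sunflowers with $|\mC|=e_q(k,n,c)$. The first step is to apply Proposition~\ref{full} to $\mC$ to conclude $\mbox{span}(\mC)=\F_q^n$. By Remark~\ref{dua}, $\mC^\perp$ is a $(n-2k+c)$-intersecting equidistant code with $|\mC^\perp|=e_q(n-k,n,n-2k+c)$, so a second application of Proposition~\ref{full} gives $\mbox{span}(\mC^\perp)=\F_q^n$ as well.

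The second step is to apply Lemma~\ref{not} twice. Applied to $\mC$, whose span is $\F_q^n$, the lemma says that if $n>2k-c$ then $\mC^\perp$ would not be a sunflower; combined with the standing inequality $n\ge 2k-c$, this forces $n=2k-c$. Applied symmetrically to $\mC^\perp\subseteq\mG_q(n-k,n)$, which has full span and center of dimension $n-2k+c$, the same lemma forces $n\le 2(n-k)-(n-2k+c)=n-c$, i.e.\ $c\le 0$, whence $c=0$. Combining, $c=0$ and $n=2k$, so $\mC$ is a partial spread of maximum cardinality in $\mG_q(k,2k)$; by the equality case of Theorem~\ref{bo} (see Remark~\ref{pss}) $\mC$ is a spread, and the same reasoning applies to $\mC^\perp$. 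The argument is essentially a bookkeeping exercise; the only point that requires some care is correctly translating the parameters of $\mC^\perp$ (ambient $n$, codeword dimension $n-k$, center dimension $n-2k+c$) when invoking the symmetric versions of Proposition~\ref{full} and Lemma~\ref{not}.
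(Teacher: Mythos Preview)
Your proof is correct and follows essentially the same approach as the paper: both apply Proposition~\ref{full} and Lemma~\ref{not} to $\mC$ to force $n=2k-c$ (equivalently, the center of $\mC^\perp$ has dimension $0$), then apply the same pair of results to $\mC^\perp$ to force $c=0$, and finally conclude via the spread case of Theorem~\ref{bo}. The only cosmetic difference is that you first establish full span for both codes and then invoke Lemma~\ref{not} twice, whereas the paper interleaves the two applications; you also make the (easy) backward implication explicit, which the paper leaves to the reader.
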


\begin{proof}
 Assume that both $\mC$ and $\mC^\perp$ are sunflowers. 
By Remark \ref{dua} the center of $\mC^\perp$ has dimension
$c'=n-2k+c\ge 0$. Since $|\mC|=e_q(k,n,c)$,
Proposition \ref{full} and Lemma \ref{not} applied to
$\mC$ give $c'=0$. In particular, $\mC^\perp$ is a partial spread.
Since $\mC$ is optimal, then $\mC^\perp$ is optimal by Remark~\ref{dua}.
By Proposition \ref{full} and Lemma \ref{not}, $c=n-2(n-k)+c'=0$. 
Hence $n=2k$ and $\mC$ is a partial spread. Since
$n=2k$ and $\mC$ and $\mC^\perp$ have maximum cardinality, they are spreads.
\end{proof}

Hence when $n=2k$ and $c=0$, every $0$-intersecting equidistant code $\mC \subseteq
\mG_q(k,2k)$ of maximum cardinality is a spread, and its orthogonal is again a spread with
the same parameters. Therefore every equidistant code of maximum cardinality is a 
sunflower, as well as its orthogonal.

For $n=2k$ and $c>0$, $e_q(k,2k,c) \geq e_q(k-c,2k-c,0)$ by
Corollary~\ref{cor}, and the 
two quantities do not always agree, e.g. $$e_q(3,6,1)>e_q(2,5,0),$$
as shown in the 
next example.
Moreover, for any $k,c$ for which $e_q(k,2k,c)=e_q(k-c,2k-c,0)$,
let 
$\mC\subseteq\mG_q(k,2k)$ be a $c$-intersecting sunflower of cardinality
$e_q(k,2k,c)$. Then by 
Corollary~\ref{cc} we also have a $c$-intersecting equidistant code
$\mC^\perp\subseteq\mG_q(k,2k)$
of maximum cardinality  
which is not a sunflower. Hence for any $k,c$ we have $c$-intersecting 
equidistant codes $\mC\subseteq\mathcal{G}(k,2k)$ of maximum cardinality which are not sunflowers, 
but we may not always have sunflower codes of the same cardinality.

In addition, it may be possible to also 
have an equidistant code $\mC\subseteq\mG_q(k,2k)$ of cardinality $e_q(k,2k,c)$
such 
that neither $\mC$ nor $\mC^\perp$ are sunflowers. This is the case of the
following example.

\begin{example}[\cite{bem}, Example~1.2] \label{exkl}
The hyperbolic Klein set $\mC\subseteq\mG(3,6)$ is an equidistant code with
$c=1$ and $|\mC|=q^3+q^2+q+1$. 
$\mC$ is not a sunflower, nor the orthogonal of a sunflower, since the largest
possible cardinality of a sunflower with $k=3, n=6, c=1$ is 
$$e_q(2,5,0)\leq \frac{q^5-q}{q^2-1}=q^3+q<|\mC|=|\mC^\perp|,$$
where the inequality follows from Theorem~\ref{bo}. In particular, 
$e_q(3,6,1)>e_q(2,5,0).$
\end{example}


Combining Propositions  \ref{c=k-1}, \ref{noi}, \ref{primo}, \ref{full},  and 
Corollary \ref{cc} one
easily obtains the following classification of equidistant codes of maximum cardinality.

\begin{theorem} \label{class}
Let $\mC \subseteq \mG_q(k,n)$ be a $c$-intersecting equidistant code with $|\mC|=e_q(k,n,c)$. 
Assume that one of the following conditions holds:
\begin{itemize}
\item $c\in\{0,k-1,2k-n\}$,
\item $n \le (3k+1)/2$ and $q \gg 0$,
\item $n \ge 3k+1$ and $q \gg 0$.
\end{itemize}
Then either $\mC$ is a sunflower or $\mC^\perp$ is a sunflower, 
and the two are mutually exclusive unless $c=0$ and $n=2k$.
\end{theorem}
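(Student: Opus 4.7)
The plan is to prove Theorem~\ref{class} by a case analysis on which of the three bullet-point hypotheses is in force, and in each case invoke one of the classification results already established. No new combinatorial input seems to be required; the theorem is a consolidation statement, so the work is entirely in checking ranges of parameters so that the cited propositions apply.

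First I would dispose of the three exceptional values of $c$. For $c=0$, every $0$-intersecting equidistant code $\mC$ satisfies $U\cap V=\{0\}$ for all distinct $U,V\in\mC$, so by definition $\mC$ is a sunflower with center $\{0\}$; no optimality hypothesis is even needed here. For $c=k-1$, Proposition~\ref{c=k-1} already provides the desired dichotomy. For $c=2k-n$ one passes to $\mC^\perp$ via Remark~\ref{dua}: the orthogonal code is an $(n-2k+c)$-intersecting equidistant code in $\mG_q(n-k,n)$, i.e., a $0$-intersecting equidistant code, and the argument for $c=0$ applied to $\mC^\perp$ gives that $\mC^\perp$ is a sunflower.

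Next I would treat the two asymptotic ranges for $n$. When $n\ge 3k+1$ we have in particular $n\ge 3k-1$, so Proposition~\ref{noi} applies directly and $\mC$ itself is a sunflower for $q\gg 0$. When $n\le(3k+1)/2$, Proposition~\ref{primo} gives that $\mC=\mS^\perp$ for some sunflower $\mS$, which is exactly the statement that $\mC^\perp$ is a sunflower. Finally, the mutual-exclusivity clause is immediate from Corollary~\ref{cc}: if both $\mC$ and $\mC^\perp$ were sunflowers, then (using $|\mC|=e_q(k,n,c)$) the corollary forces $n=2k$ and both $\mC$ and $\mC^\perp$ to be spreads, which requires $c=0$; conversely, in the exceptional situation $c=0$, $n=2k$ both codes are spreads and hence sunflowers, so both possibilities genuinely coexist there.

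The main obstacle, if any, is purely book-keeping: verifying in the $c=2k-n$ case that $c\ge 0$ (i.e., $n\le 2k$) so that the hypothesis is non-vacuous, checking that $n-2k+c\ge 0$ in the dual argument so Remark~\ref{dua} can be applied, and ensuring that overlapping hypotheses (e.g., $c=0$ and $n=2k$, or $c=k-1$ with small $n$) give consistent conclusions. Since Proposition~\ref{full} has already been used in the derivation of Corollary~\ref{cc}, it need not be invoked separately here.
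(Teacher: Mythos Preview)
Your proposal is correct and follows essentially the same approach as the paper, which simply states that the theorem is obtained by combining Propositions~\ref{c=k-1}, \ref{noi}, \ref{primo}, \ref{full}, and Corollary~\ref{cc}. Your case analysis makes explicit the obvious facts (the $c=0$ and $c=2k-n$ cases) that the paper leaves implicit, and your observation that Proposition~\ref{full} enters only through Corollary~\ref{cc} is accurate.
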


Notice that $n\gg k$ is the relevant practical situation within network  coding. 
Moreover, one needs to assume $q\gg 0$ in order  to have a solution to the
network coding problem
(see \cite{med}, Chapter 1 for details).

\section{Other properties of equidistant codes} \label{other}

We devote this section to equidistant codes that are not sunflowers. 
The property of having a center characterizes sunflowers among equidistant codes.

\begin{definition}
Let $\mC \subseteq \mG_q(k,n)$ be a $c$-intersecting equidistant code,
$0 \le c \le k-1$. The \textbf{set of centers} of $\mC$ is
$T(\mC):= \{ U \cap V : U,V \in \mC, \ U \neq V\}$, and the \textbf{number of
centers} of $\mC$ is $t(\mC):=|T(\mC)|$. The \textbf{set of petals} attached to
a  center
$A \in T(\mC)$ is $\mP(A):= \{ U \in \mC : A \subseteq U\}$.
\end{definition}

In the next proposition we show that equidistant codes that
have many codewords are either sunflowers, or they have a large number of
centers.

\begin{proposition}\label{last}
Let $\mC \subseteq \mG_q(k,n)$ be an $c$-intersecting equidistant code,
$0 \le c \le k-1$. 
One of the following  properties holds:

\begin{enumerate}
\item $\mC$ is a sunflower, or
\item $t(\mC) \ge |\mC|\frac{q^c-q^{c-1}}{q^k-q^{c-1}}$.
\end{enumerate}
\end{proposition}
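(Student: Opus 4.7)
The plan is to reduce the claim to the pointwise bound $|\mP(A)| \le (q^{k-c+1}-1)/(q-1)$ for every $A \in T(\mC)$, and then conclude via a simple double count. Granting the pointwise bound, I use $n_V \ge 1$ for every $V \in \mC$ (which holds because $|\mC| \ge 2$) to get $\sum_{A \in T(\mC)}|\mP(A)| = \sum_{V \in \mC} n_V \ge |\mC|$; combined with the pointwise inequality this yields $|\mC| \le t(\mC)\cdot(q^{k-c+1}-1)/(q-1)$, which is exactly (2) after the identity $(q-1)/(q^{k-c+1}-1) = (q^c-q^{c-1})/(q^k-q^{c-1})$.

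To establish the pointwise bound, I fix $A \in T(\mC)$. Since $\mC$ is not a sunflower, $\mP(A) \subsetneq \mC$, so I pick $U \in \mC \setminus \mP(A)$ and study the map $V \mapsto V \cap U$ from $\mP(A)$ into the set of $c$-dimensional subspaces of $U$; each image contains $A \cap U$ because $A \subseteq V$. This map is injective: if $V \cap U = V' \cap U$ for distinct $V, V' \in \mP(A)$, then $V \cap U \subseteq V \cap V' = A$, which together with $V \cap U \subseteq U$ forces $V \cap U \subseteq A \cap U$, contradicting $\dim(V \cap U) = c > \dim(A \cap U)$. Moreover, for distinct $V, V' \in \mP(A)$ one has $(V \cap U) \cap (V' \cap U) = V \cap V' \cap U = A \cap U$. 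Setting $r := c - \dim(A \cap U) \ge 1$, the images therefore form a partial spread of $r$-dimensional subspaces in the quotient $U/(A \cap U) \cong \F_q^{k-c+r}$.

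Applying Theorem~\ref{bo} to this partial spread gives $|\mP(A)| \le (q^{k-c+r}-q^s)/(q^r-1)$, where $s = (k-c) \bmod r$. An elementary algebraic check, which after cross-multiplication reduces to the trivially true $(q^r-q)(q^{k-c}-1)\ge 0$ (valid for $r \ge 1$ and $c < k$), shows that this upper bound is at most $(q^{k-c+1}-1)/(q-1)$ uniformly in $r$, which completes the pointwise bound and hence the proposition. The main technical obstacle is the dimension bookkeeping needed to correctly identify the image of the map $V \mapsto V \cap U$ as a partial spread in the quotient $U/(A \cap U)$, so that Theorem~\ref{bo} can be invoked; once this is set up, the rest of the argument is routine.
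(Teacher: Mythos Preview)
Your proof is correct and follows the same overall structure as the paper's: both use the covering inequality $|\mC|\le\sum_{A\in T(\mC)}|\mP(A)|$ and then establish the identical pointwise bound $|\mP(A)|\le(q^{k-c+1}-1)/(q-1)=(q^k-q^{c-1})/(q^c-q^{c-1})$. The only difference lies in how the pointwise bound is obtained. You fix $U\in\mC\setminus\mP(A)$, observe that the subspaces $\{V\cap U:V\in\mP(A)\}$ form a sunflower in $U$ with center $A\cap U$, pass to the quotient $U/(A\cap U)$ to get a partial spread, invoke Theorem~\ref{bo}, and then maximize the resulting bound over $r=c-\dim(A\cap U)$ via the algebraic check. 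The paper instead carries out the same idea as a one-line point count inside the outlier codeword $V\in\mC\setminus\mP(A_i)$: from
\[
q^k=|V|\ \ge\ \Bigl|\bigcup_{W\in\mP(A_i)}(V\cap W)\Bigr|\ =\ |\mP(A_i)|\,q^c-(|\mP(A_i)|-1)\,|V\cap A_i|
\]
together with $|V\cap A_i|\le q^{c-1}$, the bound drops out directly, with no appeal to Theorem~\ref{bo} and no optimization over $r$. Your route is slightly longer but makes the underlying partial-spread structure explicit; the paper's is more elementary and self-contained.
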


\begin{proof}
If $\mC$ is not a sunflower, then $t:=t(\mC) \ge 2$. 
Choose an enumeration $T(\mC)=\{ A_1,...,A_t\}$. Since $\mC= \bigcup_{i=1}^t \mP(A_i)$, we have 
\begin{equation} \label{in}  
|\mC| \le \sum_{i=1}^t |\mP(A_i)|.
\end{equation}
For any $i \in \{ 1,...,t\}$, $\mP(A_i)$ is a sunflower with $c$-dimensional center $A_i$, minimum distance $2(k-c)$, and cardinality $s_i:=|\mP(A_i)|$.
If $V\in\mC\setminus\mP(A_i)$, then 
$$|V|\geq\left|V\cap\bigcup_{U\in\mP(A_i)}U\right|=\sum_{U\in\mP(A_i)}|V\cap U|-(s_i-1)|V\cap A_i|=s_i|A_i|-(s_i-1)|V\cap A_i|$$
hence $q^k\geq s_iq^c-(s_i-1)q^{c-1}=s_i(q^c-q^{c-1})+q^{c-1}.$
Therefore we have shown that $$|\mP(A_i)|\leq\frac{q^k-q^{c-1}}{q^c-q^{c-1}}$$
for all $1\leq i\leq t$, and the thesis follows by (\ref{in}).
\end{proof}

In particular, for a code with maximum cardinality which is not a sunflower, we can 
give the following asymptotic estimate of the number of centers as $q$ grows.

\begin{corollary}\label{estimate}
Let $\mC\subseteq\mG_q(k,n)$ be a $c$-intersecting equidistant code. 
Assume that $|\mC|=e_q(k,n,c)$ and that $\mC$ is not a sunflower. Denote by $r$
the remainder of 
the division of $n-c$ by $k-c$. Then
$$t(\mC)\geq e_q(k,n,c)\frac{q^c-q^{c-1}}{q^k-q^{c-1}}\geq 
\left(\frac{q^{n-c}-q^r}{q^{k-c}-1}-q^r+1\right)\frac{q^c-q^{c-1}}{q^k-q^{c-1}}.$$ 
In particular, $\lim_{q\to\infty} t(\mC)q^{-(n-2k+c)}\in [1,+\infty]$.
\end{corollary}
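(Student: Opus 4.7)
The plan is to obtain the corollary as a direct consequence of Proposition~\ref{last}, combined with the lower bound on $e_q(k,n,c)$ coming from Theorem~\ref{bo} and Corollary~\ref{cor}, and then to extract the asymptotic statement by a degree count in $q$.

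First I would apply Proposition~\ref{last} to $\mC$. Since by hypothesis $\mC$ is not a sunflower, alternative (1) is ruled out, so we obtain alternative (2), namely
\[
t(\mC)\;\ge\;|\mC|\,\frac{q^c-q^{c-1}}{q^k-q^{c-1}}\;=\;e_q(k,n,c)\,\frac{q^c-q^{c-1}}{q^k-q^{c-1}},
\]
which is exactly the first inequality in the displayed chain.

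For the second inequality, I would combine Corollary~\ref{cor} and Theorem~\ref{bo}. By Corollary~\ref{cor},
\[
e_q(k,n,c)\ \ge\ e_q(k-c,\,n-c,\,0),
\]
and the remainder of $n-c$ divided by $k-c$ is precisely the $r$ appearing in the statement, so Theorem~\ref{bo} gives
\[
e_q(k-c,n-c,0)\ \ge\ \frac{q^{n-c}-q^r}{q^{k-c}-1}-q^r+1.
\]
Substituting this into the first inequality yields the second one.

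Finally, to obtain the asymptotic claim, I would just track the dominant power of $q$ in each factor of the right-hand side lower bound. As $q\to\infty$,
\[
\frac{q^{n-c}-q^r}{q^{k-c}-1}-q^r+1\ \sim\ q^{n-k}, \qquad
\frac{q^c-q^{c-1}}{q^k-q^{c-1}}\ \sim\ q^{c-k},
\]
so their product is asymptotic to $q^{(n-k)+(c-k)}=q^{n-2k+c}$. Dividing by $q^{n-2k+c}$ and passing to the limit shows that the lower bound for $t(\mC)\,q^{-(n-2k+c)}$ tends to $1$, hence $\liminf_{q\to\infty}t(\mC)\,q^{-(n-2k+c)}\ge 1$, which gives the stated membership in $[1,+\infty]$. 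There is no serious obstacle here: the only thing to be slightly careful about is matching the exponent $r$ in the statement with the remainder appearing in Theorem~\ref{bo}, which is immediate since both refer to $n-c$ modulo $k-c$.
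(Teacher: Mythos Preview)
Your argument is correct and follows essentially the same route as the paper's own proof: invoke Proposition~\ref{last} for the first inequality, then Corollary~\ref{cor} together with Theorem~\ref{bo} for the second, and finish with the obvious degree count in $q$ to get the asymptotic claim. Your write-up is in fact slightly more detailed than the paper's, but the structure is identical.
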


\begin{proof}
The inequality follows by Proposition~\ref{last}, Corollary~\ref{cor}, and Theorem~\ref{bo}. 
Hence $$\lim_{q\to\infty}t(\mC)q^{-(n-2k+c)}\geq\lim_{q\to\infty}
\left(\frac{q^{n-c}-q^r}{q^{k-c}-1}-q^r+1\right)
\frac{q^c-q^{c-1}}{q^k-q^{c-1}}q^{-(n-2k+c)}=1,$$
as claimed.
\end{proof}


The orthogonal of a sunflower is often an example of an optimal code with a large number of centers.

\begin{example}
Let $0< c\leq k-1$, $\mS\subset\mG_q(n-k,n)$ be a sunflower of maximum cardinality with 
$(n-2k+c)$-dimensional center. Let $\mC=\mS^\perp\subset\mG_q(k,n)$, then $\mC$ is 
$c$-intersecting and $|\mC|=|\mS|$. $\mC$ is not a sunflower by Corollary~\ref{cc} and it has 
$$t(\mC)={|\mC| \choose 2}.$$
In fact, for any $A,B,D\in\mS$ pairwise distinct one has $$\dim(A+B)^\perp=n-2k+c>n-3k+2c=\dim(A+B+D)^\perp,$$ 
hence $$A^\perp\cap B^\perp\neq A^\perp\cap B^\perp\cap D^\perp.$$ In particular, there exist no distinct 
$A^\perp,B^\perp,D^\perp\in\mC$ such that $A^\perp\cap D^\perp=B^\perp\cap D^\perp$. 
Similarly one shows that there exist no distinct $A^\perp,B^\perp,D^\perp, E^\perp\in\mC$ 
such that $A^\perp\cap D^\perp=B^\perp\cap E^\perp$.
\end{example}

\section{A systematic construction of sunflower codes} \label{con}

In this section we modify the construction of partial spreads proposed in \cite{noi} to systematically produce sunflower codes for any choice of $k,n,c$. We are motivated by Proposition~\ref{noi} where we show that every equidistant code of maximum cardinality is a sunflower, provided that $q\gg 0$ and $n\geq 3k-1$. An efficient decoding algorithm is given in Section \ref{de}.

\begin{notation}
Denote by $I_m$ an identity matrix of size $m\times m$, by $0_m$ a zero matrix of size $m\times m$, and by $0_{m\times\ell}$ a zero matrix of size $m\times\ell$.
\end{notation}

\begin{definition} \label{companion}
Let $p \in \F_q[x]$ be an irreducible monic polynomial of degree $s \ge 1$.
Write $p(x)=x^s+\sum_{i=0}^{s-1} p_ix^i$. The \textbf{companion matrix}
of $p$ is the $s \times s$ matrix
$$\mbox{M}(p):=\begin{bmatrix}
   0 & 1 & 0 & \cdots & 0 \\
0 & 0 & 1 &  & 0 \\
\vdots & & & \ddots & \vdots \\
0 & 0 & 0 &  & 1 \\
-p_0 & -p_1 & -p_2 & \cdots &-p_{s-1} 
  \end{bmatrix}.$$ 
\end{definition}

The construction of sunflower codes which  we propose is based on companion
matrices of polynomials. It extends the constructions of~\cite{GR} and~\cite{noi}.

\begin{theorem} \label{nsconstr}
 Let $1\leq k<n$ and $\min\{0,2k-n\}\le c \le k-1$ be integers. Write
$n-c=h(k-c)+r$, with $0 \le r \le k-c-1$, $h \ge 2$.
Choose irreducible monic polynomials $p,p' \in \F_q[x]$
of degree $k-c$ and $k-c+r$, respectively. Set $P:=\mbox{\textnormal{M}}(p)$
and $P':=\mbox{\textnormal{M}}(p')$.
For $1 \le i \le h-1$ let
$\mM_i(p,p')$ be the set of $k \times n$ matrices of the form
$$\begin{bmatrix}
I_c & 0_{c\times (k-c)} & \cdots & \cdots & \cdots & \cdots & \cdots & 0_{c\times (k-c)} & 0_{c\times (k-c+r)} \\
0_{(k-c) \times c} & 0_{k-c} & \cdots & 0_{k-c} & I_{k-c} & A_{i+1} & \cdots & A_{h-1} & A_{[k-c]}
\end{bmatrix},$$
where we have $i-2$ consecutive copies of $0_{k-c}$, the matrices $A_{i+1},...,A_{h-1} \in \F_q[P]$, $A \in \F_q[P']$, and $A_{[k-c]}$
denotes the last $k-c$ rows of $A$.
The set 
\begin{eqnarray*}
 \mC := &\bigcup_{i=1}^{h-1}& \{ \mbox{rowsp}(M) : M \in \mM_i(p,p')\} \\
&\cup& \left\{ \mbox{rowsp} \begin{bmatrix}
I_c & 0_{c\times (k-c)} & \cdots & 0_{c \times (k-c)} & 0_{c \times (k-c+r)} & 0_{c \times (k-c)} \\
0_{(k-c) \times c} & 0_{k-c} & \cdots & 0_{k-c} & 0_{(k-c) \times (k-c+r)} & I_{k-c}
\end{bmatrix}\right\}
\end{eqnarray*}
is a sunflower in $\mG_q(k,n)$ of cardinality
$$|\mC|=\frac{q^{n-c}-q^r}{q^{k-c}-1}-q^r+1.$$
\end{theorem}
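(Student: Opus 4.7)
The construction is arranged so that every codeword of $\mC$ contains the $c$-dimensional subspace $C:=\langle e_1,\dots,e_c\rangle$ of $\F_q^n$. Indeed, the first $c$ rows of every generating matrix (both those in $\mM_i(p,p')$ and the distinguished one) are precisely $[I_c\mid 0]$, so $C$ lies in each rowspan. My plan is to apply the second direction of Remark~\ref{legame}: to show that $\mC$ is a sunflower with center $C$, it is enough to exhibit the image of $\mC$ in $\F_q^n/C\cong\F_q^{n-c}$ as a partial spread in $\mG_q(k-c,n-c)$. The equality of cardinalities $|\mC|=|\mC/C|$ will then be automatic, and the center dimension $c$ will be visible from the construction.

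Quotienting out by $C$ amounts to deleting the first $c$ columns and the first $c$ rows of each generating matrix. What remains is a $(k-c)\times(n-c)$ matrix of exactly the shape
$$[0_{k-c}\mid\cdots\mid 0_{k-c}\mid I_{k-c}\mid A_{i+1}\mid\cdots\mid A_{h-1}\mid A_{[k-c]}],$$
with $A_{i+1},\dots,A_{h-1}\in\F_q[P]$ and $A\in\F_q[P']$, together with the distinguished codeword $[0_{k-c}\mid\cdots\mid 0_{(k-c)\times(k-c+r)}\mid I_{k-c}]$. Since $n-c=h(k-c)+r$ with $0\le r\le k-c-1$ and $h\ge 2$, this is precisely the partial spread construction of~\cite{noi} applied to the parameters $(k-c,n-c)$. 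The partial spread property — that any two distinct codewords in the quotient intersect trivially — is therefore inherited verbatim from~\cite{noi}, and pulling back through Remark~\ref{legame} yields $U\cap V=C$ for all distinct $U,V\in\mC$, so $\mC$ is a sunflower of center $C$. This reduction to~\cite{noi} is the substantive step; everything else is bookkeeping.

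For the cardinality, recall that $p$ and $p'$ irreducible of degrees $k-c$ and $k-c+r$ make $\F_q[P]$ and $\F_q[P']$ fields of orders $q^{k-c}$ and $q^{k-c+r}$ respectively. For each fixed $1\le i\le h-1$, a matrix in $\mM_i(p,p')$ is determined by an independent choice of $A_{i+1},\dots,A_{h-1}\in\F_q[P]$ (that is $h-1-i$ free parameters) and $A\in\F_q[P']$, and distinct choices produce distinct rowspans because the block $I_{k-c}$ fixes a reduced form. Hence $|\mM_i(p,p')|=q^{(h-1-i)(k-c)+(k-c+r)}$, and codewords coming from different indices $i$ are distinct because the position of the $I_{k-c}$ block differs. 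Summing and including the final distinguished codeword gives
$$|\mC|=1+q^{k-c+r}\sum_{i=1}^{h-1}q^{(h-1-i)(k-c)}=1+q^{k-c+r}\cdot\frac{q^{(h-1)(k-c)}-1}{q^{k-c}-1},$$
which equals $\tfrac{q^{n-c}-q^r}{q^{k-c}-1}-q^r+1$ after using $n-c=h(k-c)+r$, as required.
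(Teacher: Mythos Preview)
Your proof is correct and follows essentially the same route as the paper: identify the common subspace $C=\langle e_1,\dots,e_c\rangle$, pass to the quotient $\F_q^n/C\cong\F_q^{n-c}$ by stripping off the first $c$ rows and columns, recognize the resulting collection of $(k-c)$-dimensional subspaces as the partial spread of~\cite{noi}, and invoke Remark~\ref{legame}. The only minor difference is that the paper cites~\cite{noi} (Theorem~13 and Proposition~17) for both the partial spread property and the cardinality, whereas you cite~\cite{noi} for the spread property and compute the cardinality directly; your geometric-series computation is correct.
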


\begin{proof}
Let $C:= \{ v \in \F_q^n : v_i=0 \mbox{ for } i > c\}$. To simplify the
notation, let $B$ denote the matrix
$$\begin{bmatrix}
              I_c & 0_{c\times k-c} & \cdots & 0_{c \times k-c} & 0_{c \times
k-c+r}
& 0_{c \times k-c} \\
0_{k-c \times c} & 0_{k-c} & \cdots & 0_{k-c} & 0_{k-c \times k-c+r} & I_{k-c}
             \end{bmatrix}.$$
Given a matrix $M \in \mM_i(p,p') \cup \{ B\}$, 
let $\overline{M}$ be
the
matrix obtained from $M$ by deleting the first $c$ rows.
We identify $\F_q^{n-c}$ with $\{ v \in \F_q^n : v_i=0 \mbox{ for
} i=1,...,c\}$,
so that $\F_q^n=C \oplus \F_q^{n-c}$. 
For any $M \in \mM_i(p,p') \cup \{
B\} $ we have $\mbox{rowsp}(\overline{M}) \subseteq \F_q^{n-c}$. It follows
$$\mC= \{ C \oplus \mbox{rowsp}(\overline{M}) : M \in \mM_i(p,p') \cup \{
B\} \}.$$
By  \cite{noi}, Theorem 13 and Proposition 17,
the set $\{ \mbox{rowsp}(\overline{M}) : M \in \mM_i(p,p') \cup \{ B\} \}$ is a
partial spread in $\mG_q(k-c,n-c)$ of cardinality
$(q^{n-c}-q^r)/(q^{k-c}-1)-q^r+1$.
The theorem now follows from Remark \ref{legame}.
\end{proof}

\begin{notation}
 We denote the sunflower of Theorem \ref{nsconstr} by $\mF_q(k,n,c,p,p')$, and
we  call it a \textbf{sunflower code}. If $h=2$, then the construction does not depend on $p$ and we denote the code by $\mF_q(k,n,c,p')$.
In the sequel we will work with a fixed integer $0 \le c \le k-1$ and with fixed polynomials $p$ and $p'$ as in Theorem \ref{nsconstr}.
\end{notation}

\begin{example}
 Let $q=2$, $c=1$, $k=3$ and $n=6$. Let $p':=x^3+x+1 \in \F_2[x]$. The companion matrix of $p'$ is
$$P'=\begin{bmatrix}
     0 & 1 & 0 \\ 0 & 0 & 1 \\ 1 & 1 & 0
    \end{bmatrix}.
$$
A codeword of  $\mF_q(3,6,1,p')$ is either the space generated by
the rows of the matrix
$$B=\begin{bmatrix}
   1 & 0 & 0 & 0 & 0 & 0 \\
   0 & 0 & 0 & 0 & 1 & 0 \\
   0 & 0 & 0 & 0 & 0 & 1 \\
  \end{bmatrix},
$$
or the space generated by the rows of a matrix of the form
$$\begin{bmatrix}
   1 & \begin{array}{cc} 0 & 0 \end{array} & \begin{array}{ccc} 0 & 0 & 0
\end{array} \\ 
\begin{array}{cc} 0 \\ 0 \end{array} & I_2 & A_{(2)}
  \end{bmatrix},
$$
where $I_2$ is the $2 \times 2$ identity matrix, and $A_{(2)}$ denotes the last
two rows of a matrix $A\in \F_2[P']$. One can easily check that
$|\mF_2(3,6,1,p,p')|=2^3+1$.
\end{example}

For most choices of the parameters, sunflower codes have asymptotically optimal 
cardinality, as the following result shows.

\begin{proposition}
Let $n \ge 3k-1$, and let $r$ denote the remainder obtained dividing $n-c$ by $k-c$. 
For $q \gg 0$ we have $$e_q(k,n,c)-|\mF_q(k,n,c,p,p')|\le q^r-1.$$
In particular, 
$$\lim_{q\to\infty}\frac{|\mF_q(k,n,c,p,p')|}{e_q(k,n,c)}=1.$$
\end{proposition}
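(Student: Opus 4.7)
The plan is to combine the classification of optimal equidistant codes in the regime $n \ge 3k-1$, $q \gg 0$ (Proposition~\ref{noi}) with the explicit cardinality of $\mathcal{F}_q(k,n,c,p,p')$ computed in Theorem~\ref{nsconstr}, and then sandwich $e_q(k,n,c)$ between two quantities that differ by at most $q^r-1$.

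First, I would invoke Proposition~\ref{noi}: for $q\gg 0$ and $n\geq 3k-1$,
$$e_q(k,n,c) = e_q(k-c,n-c,0).$$
Since the remainder of $n-c$ divided by $k-c$ is exactly $r$, the upper bound in Theorem~\ref{bo} gives
$$e_q(k,n,c) \le \frac{q^{n-c}-q^{r}}{q^{k-c}-1}.$$
Next, from Theorem~\ref{nsconstr} I know the exact cardinality of the sunflower code:
$$|\mathcal{F}_q(k,n,c,p,p')| = \frac{q^{n-c}-q^{r}}{q^{k-c}-1} - q^{r} + 1.$$
Subtracting these two expressions yields immediately
$$e_q(k,n,c) - |\mathcal{F}_q(k,n,c,p,p')| \le q^{r} - 1,$$
which is the first claim.

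For the asymptotic statement, since $\mathcal{F}_q(k,n,c,p,p')$ is a $c$-intersecting equidistant code by Theorem~\ref{nsconstr}, we have $|\mathcal{F}_q(k,n,c,p,p')| \le e_q(k,n,c)$, hence
$$1 \le \frac{e_q(k,n,c)}{|\mathcal{F}_q(k,n,c,p,p')|} \le 1 + \frac{q^{r}-1}{|\mathcal{F}_q(k,n,c,p,p')|}.$$
The denominator grows like $q^{n-k}$, while the numerator grows like $q^{r}$. Since $r\le k-c-1\le k-1$ and $n-k\geq 2k-1$ under the hypothesis $n\geq 3k-1$, we have $r<n-k$, so the correction term tends to $0$ as $q\to\infty$, and the ratio tends to $1$.

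There is no real obstacle here: everything reduces to the clean comparison between the upper bound of Theorem~\ref{bo} for $e_q(k-c,n-c,0)$ and the explicit count of codewords in $\mathcal{F}_q(k,n,c,p,p')$, both of which are of the form $(q^{n-c}-q^r)/(q^{k-c}-1)$ up to an additive $q^r-1$. The only subtle point worth double-checking is that the exponent $r$ from Theorem~\ref{nsconstr} coincides with the remainder of $n-c$ modulo $k-c$ used in Theorem~\ref{bo}, which is immediate from the definitions.
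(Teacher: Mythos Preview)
Your proof is correct and follows essentially the same route as the paper: invoke Proposition~\ref{noi} to identify $e_q(k,n,c)$ with $e_q(k-c,n-c,0)$, bound the latter above via Theorem~\ref{bo}, subtract the exact cardinality from Theorem~\ref{nsconstr} to get $q^r-1$, and then use $r\le k-1<n-k$ to kill the error term in the limit. The only cosmetic difference is that you write the sandwich as a ratio bounded above by $1+(q^r-1)/|\mathcal{F}_q|$ while the paper writes it additively, but the content is identical.
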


\begin{proof}
By Proposition \ref{noi} and
Theorem \ref{bo} we have
$e_q(k,n,c)=e_q(k-c,n-c,0) \le \frac{q^{n-c}-q^r}{q^{k-c}-1}$.
By Theorem \ref{nsconstr} it follows that
$$e_q(k,n,c)-|\mF_q(k,n,c,p,p')| \le
\frac{q^{n-c}-q^r}{q^{k-c}-1}-|\mF_q(k,n,c,p,p')|=q^r-1.$$
If in addition $n\geq 3k-1$, by Proposition \ref{noi} the integers $q,k,n,c$ satisfy condition $(*)$ for $q \gg 0$.
By definition $|\mF_q(k,n,c,p,p')| \le e_q(k,n,c)$.
It follows that for $q \gg 0$
\begin{equation}\label{confronto}
e_q(k,n,c)-q^r+1 \le |\mF_q(k,n,c,p,p')| \le e_q(k,n,c).\end{equation}
Since $r \le k-c-1 \le k-1 < n-k$, the second part of the thesis follows taking the limit of (\ref{confronto}).
\end{proof}

\section{Decoding sunflowers codes} \label{de}

In this section we provide an efficient decoding algorithm for the sunflower codes that we constructed in Section~\ref{con}, by reducing decoding sunflower codes to decoding partial spread codes. 

\begin{definition}\label{drr}
 Let $1 \le t \le n$ be an integer. A matrix $M$ of size $t \times n$
over $\F_q$ is said to be in \textbf{reduced row-echelon form} if:
\begin{enumerate}
 \item $M$ is in row-echelon form;
\item the first non-zero entry of each row of $M$ is a 1, and it is the only non-zero entry
in its column.
\end{enumerate}
\end{definition}

\begin{remark} \label{rr}
 It is well-known that for any $1 \le t \le n$ and any $t$-dimensional
$\F_q$-subspace $X \subseteq \F_q^n$, 
there exists a unique $t \times n$ matrix
in reduced row-echelon form and with entries in $\F_q$ such that
$\mbox{rowsp}(M)=X$.
\end{remark}

\begin{notation}
 We denote the matrix $M$ of Remark \ref{rr} by $\mbox{\textnormal{RRE}}(X)$.
\end{notation}

The decoding algorithm for sunflower codes that we propose is based on the
following result.

\begin{theorem} \label{dec}
 Let $V \in \mF_q(k,n,c,p,p')$, $V=\mbox{rowsp}(M)$ where $M$ is as in Theorem \ref{nsconstr}:
$$M= \mbox{\textnormal{rowsp}} \begin{bmatrix}
                   I_c & 0_{c \times n-c} \\
                   0_{k-c \times c} & B
                  \end{bmatrix},$$
with $B$ of size $(k-c)\times (n-c)$. Let $X \subseteq \F_q^n$
be a subspace of dimension $1 \le t \le k$. Assume that $X$ decodes to $V$,
i.e.,
$d(V,X) < k-c$. Then:
\begin{enumerate}
 \item $t > c$ and there exist matrices $X_1, X_2, X_3$
of size $c \times c$, $c \times (n-c)$ and $(t-c) \times (n-c)$ respectively,
such that
$$\mbox{\textnormal{RRE}}(X)=  \begin{bmatrix}
                               X_1 & X_2 \\ 0_{(t-c) \times c} & X_3 
                              \end{bmatrix}.
$$
\item $d(\mbox{\textnormal{rowsp}}(B), \mbox{\textnormal{rowsp}}(X_3))<k-c$.
\end{enumerate}
\end{theorem}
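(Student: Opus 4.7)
The plan is to translate the hypothesis $d(V,X) < k-c$ into
$u := \dim(V \cap X) > (t+c)/2$ and to deduce both parts from the decomposition $V = C \oplus W$, where $C = \{v \in \F_q^n : v_i = 0 \text{ for } i > c\}$ and $W = V \cap D$ with $D = \{v \in \F_q^n : v_i = 0 \text{ for } i \leq c\}$. Under the natural identification $D \cong \F_q^{n-c}$ (deletion of the first $c$ coordinates), $W$ corresponds to $\mbox{rowsp}(B)$.

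Part (1) is almost immediate. Since $u \leq t$, the bound $u > (t+c)/2$ forces $t > c$. Given $t > c$, the block form of $\mbox{RRE}(X)$ is automatic from the row-echelon structure: the pivot columns $j_1 < \cdots < j_t$ are distinct positive integers, so $j_{c+1} \geq c+1 > c$, and hence each of the rows $c+1,\ldots,t$ has zeros in its first $c$ entries (zeros before the pivot). Partitioning $\mbox{RRE}(X)$ along the row cut after position $c$ and the column cut after position $c$ yields the claimed decomposition.

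For part (2), identify $Y := \mbox{rowsp}([0 \mid X_3]) \subseteq D$ with $\mbox{rowsp}(X_3)$ under $D \cong \F_q^{n-c}$. The claim reduces to $\dim(W \cap Y) > (t-c)/2$. Let $T$ be the span of the top $c$ rows of $\mbox{RRE}(X)$, so $X = T \oplus Y$ with $Y \subseteq D$, and let $\pi : \F_q^n \to \F_q^c$ be the projection onto the first $c$ coordinates. Since $Y \subseteq \ker \pi$, one has $\pi(X) = \pi(T)$; set $a := \dim \pi(X)$. A direct inspection shows $\dim(T \cap D) = c - a$ and $X \cap D = (T \cap D) \oplus Y$.

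The heart of the argument is a two-step bound. First, the projection $X \cap D \to T \cap D$ along $Y$ restricts to a linear map $\phi : W \cap X \to T \cap D$ (well-defined since $W \subseteq D$ forces $W \cap X \subseteq X \cap D$) whose kernel is exactly $W \cap Y$, giving $\dim(W \cap Y) \geq \dim(W \cap X) - (c - a)$. Second, since $W = V \cap D$ one has $W \cap X = V \cap X \cap D$, and decomposing $V \cap X$ via $\pi$ yields $u = \dim \pi(V \cap X) + \dim(W \cap X)$; as $\pi(V \cap X) \subseteq \pi(X)$ has dimension at most $a$, this gives $\dim(W \cap X) \geq u - a$. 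Combining, $\dim(W \cap Y) \geq u - c > (t-c)/2$, as required. The main obstacle is precisely this interplay: each intermediate estimate depends on the integer $a$, which is not controlled directly by the hypothesis, and only when the two bounds are combined does the dependence on $a$ cancel to yield a bound strong enough to conclude.
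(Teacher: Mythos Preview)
Your argument is correct. Both parts are proved cleanly, and the two-step estimate in part (2) --- bounding $\dim(W\cap X)\ge u-a$ via the projection $\pi$ and then $\dim(W\cap Y)\ge \dim(W\cap X)-(c-a)$ via the projection onto $T\cap D$ along $Y$ --- is a nice way to make the auxiliary quantity $a=\dim\pi(X)$ cancel.

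The paper's proof takes the dual route: instead of working with intersections and the lower bound $u>(t+c)/2$, it works with sums and the equivalent bound $\dim(V+X)<k+(t-c)/2$, then computes ranks of stacked matrices directly. The key step there is the one-line inequality
\[
\mathrm{rk}\begin{bmatrix}B\\X_3\end{bmatrix}
=\mathrm{rk}\begin{bmatrix}I_c&0\\0&B\\0&X_3\end{bmatrix}-c
\le \mathrm{rk}\begin{bmatrix}I_c&0\\0&B\\X_1&X_2\\0&X_3\end{bmatrix}-c
< (k-c)+(t-c)/2,
\]
from which $d(\mathrm{rowsp}(B),\mathrm{rowsp}(X_3))<k-c$ follows immediately. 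The paper's approach is shorter and avoids introducing the parameter $a$ at all, since on the ``sum'' side the presence of the $I_c$ block absorbs the rows $[X_1\ X_2]$ in a single rank inequality. Your approach, by contrast, is more structural: it makes explicit how $V\cap X$ decomposes relative to the splitting $\F_q^n=C\oplus D$, which could be useful if one wanted finer information about the error pattern.
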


\begin{proof}
The condition $d(V,X)<k-c$ is equivalent to
$\dim(V+X)< k+(t-c)/2$. In particular we have
$k=\dim(V) \le \dim(V+X)<k+(t-c)/2$, and so
$t>c$.
Notice moreover that by Definition \ref{drr} the $i$-th row of any matrix in 
reduced row-echelon form contains at least 
$i-1$ zeros. As a consequence, 
$$\mbox{RRE}(X)=\begin{bmatrix}
                              X_1 & X_2 \\ 0_{t-c \times c} & X_3 
                             \end{bmatrix}$$
for some matrices $X_1$, $X_2$ and $X_3$ of size $c \times c$, $c \times (n-c)$ and $(t-c) \times (n-c)$ respectively.  
To simplify the notation, we omit the size of the zero matrices in the sequel.
The condition $\dim(V+X)< k+(t-c)/2$
may be written as
$$\mbox{rk} \begin{bmatrix}
      I_c & 0 \\ 0 & B \\ X_1 & X_2 \\ 0 & X_3
     \end{bmatrix} < k+(t-c)/2.
$$
Hence we have
$$\mbox{rk} \begin{bmatrix} B \\ X_3 \end{bmatrix} =
\mbox{rk} \begin{bmatrix} I_c & 0 \\ 0 & B \\ 0 & X_3 \end{bmatrix} -c \le
\mbox{rk} \begin{bmatrix} I_c & 0 \\ 0 & B \\ X_1 & X_2 \\ 0 & X_3 \end{bmatrix} -c <
(k-c)+(t-c)/2.
$$
Since $\dim(X)=t$, we have $\mbox{rk}(X_3) = t-c$. It follows that
\begin{eqnarray*}
d(\mbox{rowsp}(B), \mbox{rowsp}(X_3)) &=& 2 \mbox{rk}
\begin{bmatrix} B \\ X_3 \end{bmatrix}-\mbox{rk}(B)-\mbox{rk}(X_3) \\
&<& 2(k-c)+t-c-(k-c)-(t-c) \\
&=& k-c,
\end{eqnarray*}
as claimed.
\end{proof}

Theorem \ref{dec} provides in particular 
the follwing efficient algorithm to decode a sunflower code.

\begin{algo}[Decoding a $\mF_q(k,n,c,p,p')$ code] \label{algor}
 \mbox{ }
 \begin{itemize}
  \item  \textbf{Input:} A decodable subspace $X \subseteq \F_q^n$ of
dimension $t \le k$.
\item  \textbf{Output:} The unique $V \in \mF_q(k,n,c,p,p')$ such 
that $d(V,X)<k-c$, given as a matrix in row-reduced echelon form whose rowspace
is $V$.
 \end{itemize}
\begin{enumerate}
 \item Compute $M:=\mbox{\textnormal{RRE}}(X)$.
\item Delete from $M$ the first $c$ rows and columns, obtaining
a matrix $\overline{M}$ of size $k-c \times n-c$.
\item Apply partial spread decoding to $\mbox{\textnormal{rowsp}}(\overline{M})$
as described in \cite{noi}, Section 5, and obtain a matrix $N$ of size $k-c
\times n-c$.
\item The result is $V= \mbox{\textnormal{rowsp}}\begin{bmatrix}
                      I_c & 0 \\ 0 & N
                     \end{bmatrix}$.
\end{enumerate}
\end{algo}

\begin{remark}
For any decodable subspace, $t>c$ by Theorem \ref{dec}.
The assumption $t \le k$ is not restrictive from the following point of view: The
receiver may collect incoming vectors until the received subspace has
dimension $k$, and then attempt to decode the collected data.
We also notice that the computation of $\mbox{RRE}(X)$ has a low computational
cost. Indeed, the receiver obtains the subspace $V$ as
the span of incoming vectors, i.e., as the rowspace of a 
matrix. The reduced row-echelon form of such matrix may be computed
by Gaussian elimination.
\end{remark}

\section{The orthogonal of a sunflower code} \label{duals}

By Proposition \ref{full} and Lemma \ref{not}, the orthogonals of sunflower
codes of Theorem \ref{nsconstr} are equidistant codes
that are not sunflowers. Moreover, they are asymptotically optimal equidistant
codes for sufficiently large parameters (Remark \ref{dua} and Theorem \ref{class}). 
We can easily write them as rowspaces of 
matrices, as we show
in this section. We will need the following preliminary lemma, whose proof is left to the reader.

\begin{lemma}\label{cons}
 Let $N$ be a $t \times (n-t)$ matrix over $\F_q$. We have
$$\mbox{rowsp}\left( \begin{bmatrix}I_t & N \end{bmatrix} \right)^\perp = 
\mbox{rowsp}\left( \begin{bmatrix}-N^t & I_{(n-t)\times (n-t)} \end{bmatrix}
\right).$$
\end{lemma}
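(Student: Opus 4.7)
The plan is to prove the lemma by checking containment in one direction and then matching dimensions. Both rowspaces live in $\F_q^n$, and the standard inner product on $\F_q^n$ pairs a row $(a,b)$ with $a\in\F_q^t$, $b\in\F_q^{n-t}$ against another row $(a',b')$ by $a\cdot a' + b\cdot b'$.

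First I would observe that $\dim\mbox{rowsp}(\begin{bmatrix}I_t & N\end{bmatrix})=t$, since the first $t$ columns form $I_t$ and hence the rows are linearly independent. Therefore its orthogonal has dimension $n-t$. Symmetrically, $\dim\mbox{rowsp}(\begin{bmatrix}-N^t & I_{n-t}\end{bmatrix})=n-t$, since its last $n-t$ columns form $I_{n-t}$.

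Next, I would verify that every row of the second matrix lies in the orthogonal of the rowspace of the first. Denote by $e_i^{(t)}$ and $e_j^{(n-t)}$ the standard basis vectors of $\F_q^t$ and $\F_q^{n-t}$ respectively, and write $N_{i,j}$ for the $(i,j)$ entry of $N$. The $i$-th row of $\begin{bmatrix}I_t & N\end{bmatrix}$ is $(e_i^{(t)}, N_{i,\bullet})$, where $N_{i,\bullet}$ is the $i$-th row of $N$. The $j$-th row of $\begin{bmatrix}-N^t & I_{n-t}\end{bmatrix}$ is $(-N_{\bullet,j}, e_j^{(n-t)})$, where $N_{\bullet,j}$ is the $j$-th column of $N$ (viewed as a row). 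Their inner product equals
$$e_i^{(t)}\cdot(-N_{\bullet,j}) + N_{i,\bullet}\cdot e_j^{(n-t)} = -N_{i,j}+N_{i,j}=0.$$

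Since this holds for every $i,j$, the rowspace of $\begin{bmatrix}-N^t & I_{n-t}\end{bmatrix}$ is contained in the orthogonal of the rowspace of $\begin{bmatrix}I_t & N\end{bmatrix}$. As both have dimension $n-t$, they coincide.

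There is no real obstacle here: the statement is a standard linear-algebra identity, and the only bookkeeping is keeping the block sizes straight when computing the inner product. The sign convention on $-N^t$ is precisely what is needed to make the two summands cancel.
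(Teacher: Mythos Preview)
Your proof is correct. The paper explicitly leaves the proof of this lemma to the reader, and the argument you give---checking that the two rowspaces have complementary dimensions $t$ and $n-t$, then verifying that the rows of $\begin{bmatrix}-N^t & I_{n-t}\end{bmatrix}$ are orthogonal to the rows of $\begin{bmatrix}I_t & N\end{bmatrix}$ via the entrywise computation $-N_{i,j}+N_{i,j}=0$---is precisely the standard verification one expects.
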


\begin{remark} \label{trickdual}
 Lemma \ref{cons} allows us to construct the orthogonal of a vector space $V$
 given as the rowspace of a full-rank matrix $M$ in reduced row-echelon form.
Indeed, if $M$ is such a matrix of size, say, $t \times n$, then there exists a
permutation $\pi:\{1,...,n \} \to 
\{1,...,n\}$
such that $M^\pi$ has the form $\begin{bmatrix}I_t & N \end{bmatrix}$, where
$M^\pi$ is the matrix whose $\pi(i)$-th columns is the $i$-th column of $M$. By
Lemma \ref{cons}
we have $V^\perp=\mbox{rowsp}\left( \begin{bmatrix}-N^t & I_{(n-t)\times (n-t)}
\end{bmatrix}^{\pi^{-1}} \right)$.
 \end{remark}
 
\begin{remark}
Remark \ref{trickdual} allows us to describe in matrix form the orthogonal
of a sunflower code $\mC=\mF_q(k,n,c,p,p')$. Indeed, following the notation of Theorem
\ref{nsconstr}, the orthogonal of the rowspace of the matrix
$$\begin{bmatrix}
I_c & 0_{c\times (k-c)} & \cdots & \cdots & \cdots & \cdots & \cdots & 0_{c\times (k-c)} & 0_{c\times (k-c+r)} \\
0_{(k-c) \times c} & 0_{k-c} & \cdots & 0_{k-c} & I_{k-c} & A_{i+1} & \cdots & A_{h-1} & A_{[k-c]}
\end{bmatrix}$$
is the rowspace of the matrix
$$\begin{bmatrix}

0_{(k-c) \times c} &  \multicolumn{3}{c}{\multirow{3}{*}{$I_{(i-1)(k-c)}$}} & 0_{k-c} &
 \cdots & \cdots  & \cdots & \cdots \\

\vdots &  &  & & \vdots & \vdots & \vdots & \vdots & \vdots \\

\vdots & &  & & 0_{k-c} & \vdots & \vdots & \vdots & \vdots \\

\vdots &  0_{k-c} & \cdots & 0_{k-c} & -A_{i+1}^t &
\multicolumn{4}{c}{\multirow{4}{*}{$I_{n-k-(i-1)(k-c)}$}}      \\

\vdots &  \vdots & \vdots & \vdots & \vdots & &  & &  \\

\vdots &  \vdots & \vdots & \vdots & -A_{h-1}^t &   &    & &  \\

0_{(k-c+r)\times c} &  0_{(k-c+r)\times (k-c)} & \cdots & \cdots & -A_{[k-c]}^t &   &    & &  \\

\end{bmatrix}.$$
\end{remark}

Algorithm \ref{algor} and Remark \ref{trickdual} can also be combined to
efficiently decode
the orthogonal of a sunflower code. 

\begin{remark}
Let $\mC=\mF_q(k,n,c,p,p')$ be a sunflower code, and let $X
\subseteq \F_q^k$
be a received $t$-dimensional space. Since $d(\mC)=d(\mC^\perp)$ and
$d(X,V^\perp)=d(X^\perp,V)$ for all $V \in \mC$, the space $X$ decodes to
$V^\perp$ in $\mC^\perp$ if and only if $X^\perp$ decodes to $V$ in $\mC$.
This gives the following Algorithm \ref{algor2} to decode the orthogonal of a sunflower code.
\end{remark}

\begin{algo}[Decoding a $\mF_q(k,n,c,p,p')^\perp$ code] \label{algor2}
 \mbox{ }
 \begin{itemize}
  \item  \textbf{Input:} A decodable subspace $X \subseteq \F_q^n$ of
dimension $t \ge n-k$.
\item  \textbf{Output:} The unique $V \in \mF_q(k,n,c,p,p')$ such 
that $d(V^\perp,X)<k-c$, given as a matrix  whose rowspace
is $V$.
 \end{itemize}
\begin{enumerate}
 \item Compute $L:=\mbox{\textnormal{RRE}}(X)$.
\item Use Remark \ref{trickdual} to construct a matrix $L'$ such that
$\mbox{rowsp}(L')=X^\perp$.
\item  Compute the reduced row-echelon form, say $M$, of $L'$. Since
$t \ge n-k$, $M$ will have at most $k$ rows, as required by Algorithm
\ref{algor}.
\item Delete from $M$ the first $c$ rows and columns, obtaining
a matrix $\overline{M}$ of size $(k-c) \times (n-c)$.
\item Apply partial spread decoding to $\mbox{\textnormal{rowsp}}(\overline{M})$
as described in \cite{noi}, Section 5, and obtain a matrix $N$ of size $k-c
\times n-c$.
\item We have $V^\perp= \mbox{\textnormal{rowsp}}\begin{bmatrix}
                      I_c & 0 \\ 0 & N
                     \end{bmatrix}$. Use Remark \ref{trickdual} to describe $V$
as the
rowspace of a matrix.
\end{enumerate}
\end{algo}

%
%
%

\end{document}